\definecolor{firebrick}{rgb}{0.7,0.13,0.13}
\newcommand{\hosszu}[1]{\textcolor{black}{#1}}
\newcommand{\new}{\text{new}}
\newcommand{\FTL}{\mathsf{FTL}}
\newcommand{\fvp}{\mathsf{P}}  
\newcommand{\inc}{\mathsf{in}}
\newcommand{\out}{\mathsf{out}}
\newcommand{\taxis}{\mathsf{t\text{-}axis}}
\newcommand{\ray}{\mathsf{ray}}
\newcommand{\fm}{\mathsf{fm}}
\newcommand{\fmom}{\mathsf{fm}}
\newcommand{\timed}{\mathsf{time}}
\newcommand{\sqspace}{\mathsf{space}}
\newcommand{\sqspeed}{\mathit{v}}
\newcommand{\vel}{\mathbf{v}}
\newcommand{\vo}{{\bar o}}
\newcommand{\vx}{{\bar x}}
\newcommand{\vy}{{\bar y}}
\newcommand{\vz}{\bar z}
\newcommand{\vu}{\bar u}
\newcommand{\vvvv}{\bar v}
\newcommand{\leteq}{\mbox{$:=$}}
\newcommand{\gyok}{\sqrt{\phantom{n}}}
\newcommand{\de}{:=}
\newcommand{\coll}{\mathsf{coll}}
\newcommand{\poscoll}{\mathsf{coll}}
\newcommand{\inecoll}{\mathsf{inecoll}}
\newcommand{\defiff}{\ \stackrel{\text{def}}{\Longleftrightarrow}\ }
\newcommand{\lland}{\;\land\;}
\newcommand{\llor}{\;\lor\;}
\newcommand{\vv}{\mathit{v}}
\newcommand{\m}{\ensuremath{\mathsf m}}
\newcommand{\Ip}{\ensuremath{\mathsf{Ip}}}
\newcommand{\M}{\ensuremath{\mathsf M}}
\newcommand{\IOb}{\ensuremath{\mathsf{IOb}}} 
\newcommand{\IB}{\ensuremath{\mathsf{Ip}}}
\newcommand{\B}{\ensuremath{\mathit{B}}} 
\newcommand{\Ph}{\ensuremath{\mathsf{Ph}}} 
\newcommand{\Q}{\ensuremath{\mathit{Q}}} 
\newcommand{\W}{\ensuremath{\mathsf{W}}} 
\newcommand{\ev}{\ensuremath{\mathsf{ev}}} 
\newcommand{\Id}{\ensuremath{\mathsf{Id}}} 
\newcommand{\ax}[1]{\textcolor{axcolor}{\ensuremath{\mathsf{#1}}}} 
\newcommand{\wl}{\ensuremath{\mathsf{wl}}}
\newcommand{\w}{\ensuremath{\mathsf{w}}}
\definecolor{thmcolor}{rgb}{0,0,.4} 
\definecolor{remarkcolor}{rgb}{0,.2,0} 
\definecolor{proofcolor}{rgb}{.4,0,0} 
\definecolor{quecolor}{rgb}{.2,.2,0} 
\definecolor{axcolor}{rgb}{.3,0,.3}
\definecolor{thmbgcolor}{rgb}{0.9,0.9,1} 
\definecolor{rmbgcolor}{rgb}{0.9,1,0.9} 
\definecolor{proofbgcolor}{rgb}{1,0.9,0.9}
\definecolor{qcolor}{rgb}{0,0.4,0}
\definecolor{lqcolor}{rgb}{0,0.6,0}
\definecolor{phcolor}{rgb}{.6,0,0}
\definecolor{lphcolor}{rgb}{.7,0,0}
\definecolor{evcolor}{rgb}{.5,.3,.2}
\definecolor{obcolor}{rgb}{0,0.4,.5}
\definecolor{lobcolor}{rgb}{0,0.6,.75}
\definecolor{iobcolor}{rgb}{0,0,.6}
\definecolor{liobcolor}{rgb}{0,0,.7}
\definecolor{axbgcolor}{rgb}{1,.7,1}
\theoremstyle{definition} 
\newtheorem{thm}{\colorbox{thmbgcolor}{\textcolor{thmcolor}{Theorem}}}[section] 
\newtheorem{cor}[thm]{\colorbox{thmbgcolor}{\textcolor{thmcolor}{Corollary}}} 
\begin{document}

\title{The Existence of Superluminal Particles is Consistent with
Relativistic Dynamics} 

\author{Judit X. Madar\'asz and Gergely Sz\'ekely}
\affil{\small Alfr{\'e}d R{\'e}nyi Institute of Mathematics, Hungarian Academy of 
Sciences, Budapest P.O.Box 127, H-1364, Hungary. 
E-mails: madarasz.judit@renyi.mta.hu, szekely.gergely@renyi.mta.hu}

\date{\small\today}

\maketitle

\begin{abstract} Within an axiomatic framework, we prove that the existence
of faster than light (FTL) particles is consistent with (does not
contradict) the dynamics of Einstein's special relativity.  The proof goes
by constructing a model of relativistic dynamics where FTL particles can
move with arbitrary speeds.  To have a complete picture, we not only
construct an appropriate model but explicitly list all the basic assumptions
(axioms) we use. \end{abstract}

\noindent \emph{Keywords:} 
{special relativity, 
dynamics,
superluminal motion, tachyons, axiomatic method, first-order logic}

\section{Introduction}

From time to time certain experiments (such as OPERA 2011, MINOS 2007, etc.)
appear suggesting that there may be faster than light (FTL) particles. 
Almost all of these experiments turned out to be erroneous so far.  However,
  the tendency that these experiments usually turn out to be erroneous gives
  us no guarantee that there will be no experiment in
the future justifying the existence of FTL particles.  Also Recami's
  recent overview \cite{Recami08} contains some experimental sectors of
  physics still suggesting the existence of FTL objects.

Anyway,  if we have a reliable experiment showing the existence of FTL
particles, we have to rebuild or modify all the theories inconsistent with
(contradicting) FTL motion.  Weinberg--Salam theory is a good
  example of such a theory because it implies the impossibility of
  FTL motion \cite{Meszaros2}.

In this paper, we show that the particle dynamics of Einstein's special
  relativity would survive any experiment showing the existence of FTL
  objects because it is logically consistent with their existence.  The only
framework for investigating the consistency of a statement with a theory is
the axiomatic framework of mathematical logic. Therefore, we investigate the
consistency of FTL particles in the framework of mathematical logic. 

The investigation of FTL motion goes back to pre-relativistic times, see,
e.g., \cite{Fro94}, \cite[\S 3]{Rec86}.  Since 1905 it has generally been
believed that the nonexistence of FTL particles is a direct consequence of
special theory of relativity.  Since Tolman's antitelephone argument
\cite{tolman}, several paradoxes concerning causality violations and FTL
particles have appeared, and since the 1950s great many papers have been
published on theories for FTL particles as well as on possible resolutions
of the paradoxes, see, e.g., \cite{Arntzenius, BiDeSu62, ChaSil, Geroch,
GMMR, HC12, JW12, Je12, Nikolic, Pe13, Rec86,recami-ftl,Rec09, RZD04,
selleri-ftl,  Su70, SS86,  ZRB10,ZRB12},
 and references therein.

Since causality paradoxes are based on changing the past some way,
they are usually resolved by making restrictions on the things that can be
changed in the corresponding situations, see, e.g., Novikov's
self-consistency principle \cite{Nov90,Nov92}.  The possible resolution of
causal paradoxes has an extensive literature.  Moreover, our research
group showed in \cite{CQG}, that FTL motion does not
imply that information can be sent to the past even if we assume
that there are FTL particles moving with arbitrary speeds.
Therefore, instead of investigating the FTL motion based causal
paradoxes,  here we concentrate  only to the more basic question
whether relativistic dynamics allows the existence of massive FTL particles
or not.

To show that relativistic dynamics allows the existence of massive FTL
particles, we have to construct a model of relativistic dynamics where there
are such particles.  However, to have a complete picture, not only the
model construction is important but the basic assumptions (axioms) we take. 
Therefore, we introduce an axiomatic theory of relativistic dynamics
(\ax{SRDyn}) and show that this axiom system has an appropriate model.


As far as we know, apart from ours, none of the theories for FTL particles
in the literature is truly axiomatic in the sense of mathematical logic. 
A key feature of working within a truly axiomatic theory lies in the
fact that within such a theory no tacit assumptions are allowed, all the
assumptions have to be revealed as formal axioms.  This feature is crucial
in investigating consistency questions as well as any other foundational
questions because in these investigations we have to see clearly what is
being assumed and what is not.

In an axiomatic framework similar to the one used here, \cite{FTLconsSR}
shows that the existence of FTL inertial particles does not contradict
(i.e., it is consistent with) special relativistic kinematics.  In
other words, there is a model of relativistic kinematics containing FTL
particles.  This means exactly that the existence of FTL particles is
logically independent of relativistic kinematics because, of course,
there is also a model of relativistic kinematics in which there are no FTL
particles.

In this paper, we show that the existence of massive FTL inertial particles
is logically independent of special relativistic dynamics, too.  This means
that relativistic dynamics implies neither the nonexistence nor the
existence of massive FTL particles; or equivalently both the existence and
the nonexistence of massive FTL particles are consistent with relativistic
dynamics.

This situation is completely analogous to the fact that Euclid's
postulate of parallels is logically independent of the rest of its axioms
(in this case two different consistent theories extending the theory of
absolute geometry are Euclidean geometry and hyperbolic geometry).



Based on Einstein's original 1905 postulates, we formalize the dynamics of
special relativity within an axiomatic framework.  We chose first-order
logic to formulate axioms of special relativity because experience (e.g., in
geometry and set theory) shows that this logic is an adequate logic for
providing axiomatic foundations for a theory.
 
To create any theory of FTL particles, we have to deal with the following
  phenomenon implied already by the kinematics of special relativity.  If an
  observer sees a fusion of two particles in which an FTL particle
  participates, then a fast enough (but slower than light) observer sees
  this fusion as a decay, see Fig.\ref{decay}.
The same example also appears, e.g., in 
\cite{BiDeSu62,Rec86,Su70} and in connection with the phenomenon
\cite{BiDeSu62} says: {\it ``... according to the original criteria, various 
observers must agree on the
identity of physical laws, and not the description of any given phenomenon
...''\/}.
So the existence of FTL particles adds new concepts to the already long list
of observer dependent concepts of relativity theory, namely it is also
observer dependent whether a particle participates in a decay or a fusion.

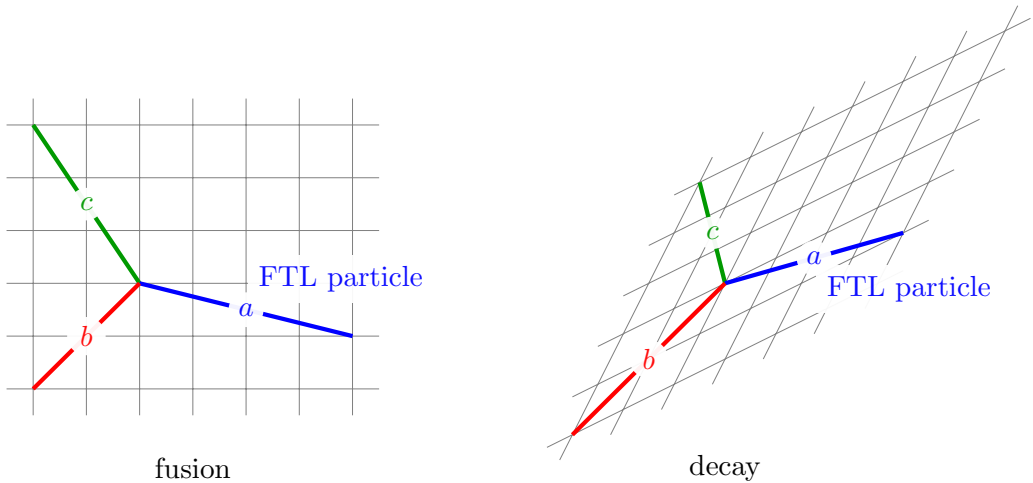
\begin{figure} \centering \small
\begin{tikzpicture}[scale=0.7]
\tikzstyle{nyil}=[ultra thick]
\tikzstyle{cimke}=[fill=white, opacity=0.95, circle, inner sep=2,]
\tikzstyle{boxcimke}=[fill=white, opacity=0.95, inner sep=4,]

\pgfmathsetmacro\ma{1}
\pgfmathsetmacro\pa{-4}
\pgfmathsetmacro\mb{2}
\pgfmathsetmacro\pb{2}
\pgfmathsetmacro\v{-0.5}
\pgfmathsetmacro\gamma{1+\v*\v/2}

\pgfmathsetmacro\pc{\pa+\pb}
\pgfmathsetmacro\mc{\ma+\mb}
\pgfmathsetmacro\MaxAbove{abs(\mc)+0.5}
\pgfmathsetmacro\MaxBelow{-max(abs(\ma),abs(\mb))-0.5}
\pgfmathsetmacro\MaxLeft{-max(abs(\pa),abs(\pb),abs(\pc))-0.5}
\pgfmathsetmacro\MaxRight{max(abs(\pa),abs(\pb),abs(\pc))+0.5}

\begin{scope}[shift={(-\MaxRight -1,0)}]
\coordinate (o) at (0,0);
\coordinate (a) at (\pa,\ma);
\coordinate (b) at (\pb,\mb);
\coordinate (c) at (\pc,\mc);

\draw[help lines] (\MaxLeft+2,\MaxBelow) grid (\MaxRight,\MaxAbove);
	\draw[nyil,green!60!black] (o) -- node[cimke]{$c$} (c);
	\draw[nyil,blue] ([rotate=180]a)-- node[cimke]{$a$} node [boxcimke,above right, yshift=2] {FTL particle}(o);
	\draw[nyil,red] ([rotate=180]b)--node[cimke]{$b$}(o);
\node (D) at (1,\MaxBelow-1) {fusion};
\end{scope}

\begin{scope}[shift={(\MaxRight+1,0)},cm={\gamma,-\v*\gamma,-\v*\gamma,\gamma,(0,0)},scale=0.85]
\coordinate (o) at (0,0);
\coordinate (a) at (\pa,\ma);
\coordinate (b) at (\pb,\mb);
\coordinate (c) at (\pc,\mc);

\draw[help lines] (\MaxLeft+2,\MaxBelow) grid (\MaxRight,\MaxAbove);
	\draw[nyil,green!60!black] (o) -- node[cimke]{$c$} (c);
	\draw[nyil,blue] ([rotate=180]a)-- node[cimke]{$a$} node [boxcimke,below right, yshift=-2] {FTL particle}(o);
	\draw[nyil,red] ([rotate=180]b)--node[cimke]{$b$}(o);
\end{scope}
\node (D) at (\MaxRight+1,\MaxBelow-1) {decay};

\end{tikzpicture}
\caption{A fusion according to one observer can be a decay according to 
another one if there are FTL particles}
\label{decay}
\end{figure}

It is important that Einstein's theory is consistent with the FTL motion of
  particles, but not with the FTL motion of inertial observers (reference
  frames) unless the space is also one dimensional, see
  Corollary~\ref{noftlobsthm}.  If the space is one dimensional, then FTL
  inertial observers can be introduced, see, e.g., \cite[\S 5]{Rec86} or
  \cite[\S 2.7 \& \S3.4]{pezsgo}.

In \cite{SS86} Sutherland and Shepanski introduce FTL reference frames and
  transformations between them.  In the recent paper \cite{HC12}, Hill and
  Cox also introduce transformations between FTL reference frames.  However,
  as Corollary~\ref{noftlobsthm} indicates it, these transformations
  contradict the principle of relativity unless spacetime is two
  dimensional, see \cite{SS86}, \cite{HCnote}.

The structure of this paper is as follows.  In Section~\ref{sec-inf}, we
explain our result and axiomatic framework without going into the details of
formalization.  In Section~\ref{sec-proofinf}, we give the intuitive idea of
the proof of our result.  In Section~\ref{sec-lang}, we recall an axiomatic
framework for dynamics from \cite{dyn-studia}.  In Section~\ref{sec-ax}, we
recall an axiom system and some theorems for kinematics of special
relativity relevant to our present investigation.  In
Section~\ref{sec-dinax}, we present an axiom system for dynamics of special
relativity theory.  The axioms for dynamics are some natural assumptions on
collisions of inertial particles, e.g., conservation of relativistic mass
and linear momentum.  In Section~\ref{sec-main}, within this axiomatic
framework, we formulate and prove our main result, namely that the existence
of FTL inertial particles is independent of dynamics of special relativity,
i.e., we prove that neither the existence nor the nonexistence of FTL
inertial particles follows from the theory, see Theorem~\ref{fotetel}. 
Consequently, it is consistent with dynamics of special relativity that
there are FTL particles.  In Section~\ref{sec-con}, we show an experimental
prediction of Einstein's special relativity on FTL particles, namely that
the relativistic mass and momentum of an FTL particle decrease with the
speed, see also~\cite{BiDeSu62,HC12,Rec86}.

\section{Informal statement of the main result}
\label{sec-inf}

To prove our statement on the existence of massive FTL inertial particles, 
we present an axiom system \ax{SRDyn} which is a formalized version of
Einstein's special relativistic dynamics, see p.\pageref{dinamika}.
Informally, \ax{SRDyn} contains the following axioms for kinematics
(see Fig.\ref{d13} on p.\pageref{d13}):
\begin{enumerate}
\item Principle of relativity (Einstein's first postulate): The
  same laws of nature holds for all inertial observers (reference
  frames) (see
  \ax{SPR^+} on p.\pageref{SPR}).
\end{enumerate}

The second postulate of Einstein literally states that ``Any ray
  of light moves in the stationary system of co-ordinates with the
  determined velocity $c$, whether the ray be emitted by a stationary
  or by a moving body,'' see \cite{einstein}. So our second axiom will
  state the existence of an inertial reference frame according to
  which the speed of light is the same in every direction
  everywhere. However, it is important to note that by the principle
  of relativity, all the reference frames have to have this property
  since there is no distinguished inertial frame of reference.

\begin{enumerate}
\setcounter{enumi}{1}
\item The light axiom (Einstein's second postulate): There is \emph{at least
    one} inertial observer, according to whom all light signals move with the
  same speed (see \ax{AxLight} on p.\pageref{axlight}).
\end{enumerate}

See \cite[Prop.1]{FTLconsSR} for a precise formulation and proof
  of the above intuitively clear argument on that \ax{AxLight} and
  \ax{SPR^+} imply that the speed of light is the same for \emph{all}
  inertial observers.

A benefit of working within a formal axiomatic framework is that
  we have to state explicitly even the most trivial assumptions. This
  is a great help in revealing the tacit assumptions of the
  investigated theory. So now we list some trivial assumptions which
were implicitly assumed by Einstein, as well as by all approaches to
special relativity theory. However, in an axiomatic framework,
  these (or some other) auxiliary axioms are needed to be stated
  explicitly to get back the intended meanings of Einstein's two
  postulates.

\begin{enumerate}
\setcounter{enumi}{2}
\item Physical quantities satisfy some algebraic
  properties of real numbers (see \ax{AxEField} on
  p.\pageref{ax-fd}).
\item Inertial observers coordinatize the same events (see \ax{AxEv}
  on p.\pageref{AxEv}).
\item Inertial observers are stationary according to their own
  coordinate systems (see \ax{AxSelf} on p.\pageref{axself}).

\item Inertial observers (can) use the same units of measurements (see
  \ax{AxSymD} on p.\pageref{AxSymD}).
\end{enumerate}

Theorem~\ref{thm-poi} (see p.\pageref{thm-poi}) justifies that
  the axioms corresponding to the statements above really captures the
  kinematics of special relativity because they imply that the
  transformations between inertial observers (reference frames) are
  Poincar\'e transformations.

In the axioms of \ax{SRDyn} concerning dynamics, we use the notion of
collision of particles. Intuitively, by a possible collision according
to an inertial observer at a coordinate point we mean a set of
incoming and outgoing inertial particles such that the relativistic
mass and linear momentum are conserved, i.e., the sum of the
relativistic masses of the incoming particles coincides with that of
the outgoing ones and the same holds for the linear momenta of the
particles, see Fig.\ref{ppcoll1}. So the conservations of relativistic
mass and linear momentum are built into the definition of the possible
collisions. Inelastic collisions are defined as collisions in which
there is only one outgoing particle.

\begin{figure}
\begin{center}
\small
\psfrag*{t}[r][r]{time}
\psfrag*{inc}[ct][ct]{incoming particles}
\psfrag*{out}[cb][cb]{outgoing particles}
\psfrag*{cp}[lb][lb]{coordinate point}
\psfrag*{space}[lt][lt]{space}
\psfrag*{time}[r][r]{time}

\includegraphics[keepaspectratio,width=0.5\textwidth]{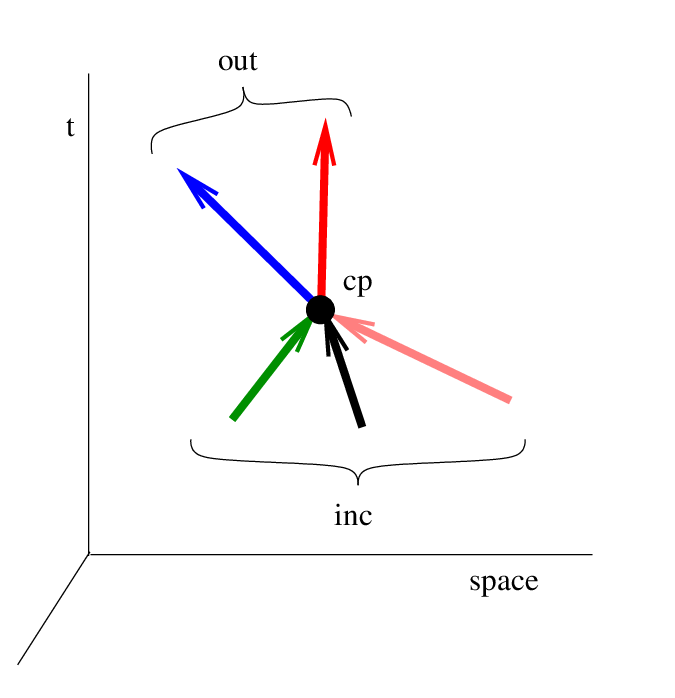}
\caption{Illustration for possible collision;
relativistic mass and linear momentum are conserved}
\label{ppcoll1}
\end{center}
\end{figure}

Now we list the axioms of \ax{SRDyn} concerning dynamics (see
Fig.\ref{d12} on p.\pageref{d12}):
\begin{enumerate}
\setcounter{enumi}{6}
\item The notion of possible collision does not depend on the
inertial observer (see \ax{AxColl_n} on p.\pageref{coll}). 
By the definition of possible collisions, this assumption basically states 
the conservation of relativistic mass and linear momentum.
\item Particles (with given velocities and relativistic
  masses) can be collided inelastically at any coordinate point (see
  \ax{Ax\forall \inecoll} on p.\pageref{axinecoll}).

\item Relativistic masses of slower than light inertial particles depend
only on their speeds (see \ax{AxSpd} on p.\pageref{axspeed}).
\item If the velocities and relativistic masses of two particles coincide for
  one inertial observer then they coincide for all the other inertial
observers,   too (see \ax{AxMass} on p.\pageref{axmass}).

\item Inertial observers can move with any slower than light velocity
 and there are inertial
particles of arbitrary positive relativistic masses and arbitrary
non-FTL velocities (see \ax{AxThEx^+} on
p.\pageref{axthexp}).

\item Every potential collision can be realized (see \ax{Ax\forall Coll} 
on p.\pageref{axminden}).
\end{enumerate}

The main result of this paper is the 
following, see Thm.\ref{fotetel} (p.\pageref{fotetel}):

\begin{quote}
\it The existence of massive FTL inertial particles is consistent with special 
relativistic dynamics \ax{SRDyn}. The nonexistence of FTL inertial
particles
is also consistent with \ax{SRDyn}. Therefore, the existence of massive FTL inertial
particles
is logically independent of \ax{SRDyn}.
\end{quote}

\section{The idea of 
constructing a model for FTL particles}
\label{sec-proofinf}

The main result says that the existence of massive FTL inertial particles is
independent of special relativistic dynamics \ax{SRDyn}.  To prove this
statement, we construct two models (solutions of the axioms) of \ax{SRDyn}
such that there are massive FTL inertial particles in one model and there are
no FTL inertial particles in the other one.

The interesting case is the
construction of the model in which there are massive FTL inertial particles.
Now we turn to explaining the intuitive idea of the construction of
this model.

The key idea is similar to the ideas of
Sudarshan~\cite{Su70}, Recami~\cite{Rec86,recami-ftl},
Bilaniuk et al.~\cite{BiDeSu62} and Arntzenius~\cite{Arntzenius}
using the ``switching-reinterpretation''
principle. The main advantage of our approach over the former ones is that
we formulate an explicit axiom system \ax{SRDyn}
 of relativistic dynamics and we
show by a concrete model construction that this axiom system is consistent
with FTL particles.

To simplify the proof, we use the notion of four-momentum, which is a
defined concept in our framework.  Since we assume that the speed of
light is $1$, the four-momentum of an inertial particle 
is a spacetime vector whose time component is the
relativistic mass and space component is the linear momentum of the
particle, see Fig.\ref{fmom} on p.\pageref{fmom}. Thus in possible
collisions four-momentum is conserved, see Fig.\ref{ppcoll1}.

First we construct the worldview of a distinguished observer having
massive FTL inertial particles. 
For every coordinate point and every nonzero spacetime vector
with nonnegative time component,
we include an incoming and an outgoing inertial particle.
The vectors
will correspond to the four-momenta of the corresponding particles.  
Clearly, there are inertial particles with arbitrary speeds in the worldview of the
distinguished observer, thus there are FTL ones.    

Constructing the worldview of one observer having FTL 
particles is easy. 
The nontrivial part of 
our construction is to construct a worldview of
  observers moving with respect to this observer and associating
  relativistic masses to all the possible particles in the moving
  frame such that all the axioms of \ax{SRDyn} are satisfied. By
  Theorem~\ref{thm-poi}, the worldviews are transformed by Poincar\'e
  transformations. The relativistic masses of slower than light
   particles also have to be transformed in accordance with the
  corresponding Poincar\'e transformation. So the question whether our
  construction can or cannot be finished depends on whether we can
  associate appropriate relativistic masses to FTL  particles.

\begin{figure}
\psfrag*{P}[lt][lt]{\shortstack[l]{distinguished observer\\ $A,B,C$ 4-momenta} }
\psfrag*{n}[lt][lt]{\shortstack[l]{new observer\\ $A^\new,B^\new,C^\new$
4-momenta} }
\psfrag*{f}[ct][ct]{fusion}
\psfrag*{d}[ct][ct]{decay}
\psfrag*{A}[lb][lb]{$A$}
\psfrag*{B}[rb][rb]{$B$}
\psfrag*{C}[rb][rb]{$C$}
\psfrag*{A'}[lt][lt]{$A'$}
\psfrag*{B'}[rb][rb]{$B'$}
\psfrag*{C'}[lb][lb]{$C'$}
\psfrag*{AA}[rt][rt]{$A$}
\psfrag*{t1}[c][c]{$A+B=C$}
\psfrag*{t2}[c][c]{$A'+B'=C'$}
\psfrag*{t3}[c][c]{$B^\new=A^\new+C^\new$}
\psfrag*{A''}[lt][lt]{$A^\new$}
\psfrag*{B''}[rb][rb]{$B^\new$}
\psfrag*{C''}[lb][lb]{$C^\new$}
\psfrag*{Aa}[lt][lt]{$A^\new$}
\psfrag*{Bb}[rb][rb]{$B^\new$}
\psfrag*{Cc}[lb][lb]{$C^\new$}
\psfrag*{c}[lb][lb]{$c$}
\psfrag*{a}[lt][lt]{$a$}
\psfrag*{b}[lt][lt]{$b$}

\psfrag*{p}[cb][cb]{Poincar\'e tr.}
\includegraphics[keepaspectratio,width=\textwidth]{fig3}
\caption{Illustration for model construction}
\label{bev}
\end{figure}


Since mass and four-momentum determine each other it is enough 
to concentrate to the transformation of four-momenta.
To understand why and how four-momenta have to transform differently for 
FTL particles let us consider the following situation.

Let $a$, $b$ and $c$ be
inertial particles and let their four-momenta be vectors $A$, $B$
and $C$ according to the distinguished observer as in the left-hand
side of Fig.\ref{bev}.  Let us note that $A+B=C$, particle $c$ is
obtained by ``fusion'' of particles $a$ and $b$, and particle $a$ is
FTL.  Then particles $a$, $b$ and $c$ form a possible collision
according to the distinguished observer. In the worldview of a new
observer, particles $a$ and $c$ are obtained by ``decay'' of particle
$b$, see the middle of Fig.\ref{bev}.  One of the main axioms of
special relativistic dynamics \ax{SRDyn} is that possible collisions
do not depend on the observer, i.e., relativistic mass and
  linear momentum have to be conserved according to all
  observers. Thus the
four-momenta $A^\new$, $B^\new$ and $C^\new$
of particles $a$, $b$ and $c$ according to the new observer
have to be such that 
\begin{equation}
\label{bev-e}
B^\new=A^\new+C^\new
\end{equation}
 since $a$ and $c$ are obtained by ``decay'' of $b$. Let us try to
define $A^\new$, $B^\new$ and $C^\new$ as the images
of $A$, $B$ and $C$ by the linear part of the Poincar\'e transformation
corresponding to the new observer.         
So let $A'$, $B'$
 and $C'$ be the images of $A$, $B$ and $C$ by the linear part of the
 Poincar\'e transformation. Therefore, $A'+B'=C'$ since $A+B=C$
and $A',B',C'$ are obtained by using a linear transformation, see the 
middle of Fig.\ref{bev}.  Since $B'\neq A'+C'$, equation
   (\ref{bev-e}) does not hold automatically. Thus, if  $A^{\new}$,
 $B^{\new}$ and $C^{\new}$ are  $-A'$, $B'$ and
 $C'$, equation (\ref{bev-e}) is satisfied, see the right-hand
 side of Fig.\ref{bev}.  

This gives the idea to define the
 four-momentum $P^{\new}$ of an arbitrary inertial particle 
 $p$ according to the new observer the following way. Let $P$ be the
 four-momentum of $p$ according to the distinguished observer and let
 $P'$ be the image of $P$ by the linear part of the chosen Poincar\'e
 transformation. $P^{\new}$ is defined to be $P'$ if the time
 component of $P'$ is positive and $P^{\new}$ is defined to be $-P'$
 if the time component of $P'$ is negative (and undefined
 otherwise). This is basically the ``switching-reinterpretation''
 principle  used in \cite{BiDeSu62,Rec86,recami-ftl,Su70,Arntzenius}.

It can be seen that possible collisions do not
 depend on the observer, and relativistic masses
 remain positive. It remains to check that all
 the other axioms of \ax{SRDyn} hold in our
 model. For example, Einstein's first
 postulate, the principle of relativity holds
 basically because the worldviews of
 all the observers are ``alike.''  For a precise proof,
 see\ p.\pageref{proof}.
 
\section{The language of our axiom system}
\label{sec-lang}

To make the
  informal assumptions listed in Section~\ref{sec-inf} precise, we need a
  formal language containing a set of basic symbols for the theory,
i.e., what objects and relations between them we use as basic
concepts.

Here we  use the following two-sorted\footnote{That our theory is
  two-sorted means only that there are two types of basic objects
  (bodies and quantities) as opposed to, e.g., Zermelo--Fraenkel set
  theory where there is only one type of basic objects (sets).}
language of first-order logic parameterized by a natural number $d\ge
2$ representing the dimension of spacetime:
\begin{equation*}
\{\, \B,\Q\,; \IOb, \Ph, +,\cdot,<, \W,\M \,\},
\end{equation*}
where $\B$ (bodies) and $\Q$ (quantities) are the two sorts, $\IOb$
(inertial observers) and $\Ph$ (light signals or photons)
are one-place relation symbols  of sort $\B$, $+$
and $\cdot$ are two-place function symbols and $<$ is a two-place relation
symbol of sort $\Q$,   $\W$ (the worldview
relation) is a $d+2$-place relation symbol the first two arguments of
which are of sort $\B$ and the rest are of sort $\Q$, 
$\M$ (the mass relation) is a $3$-place relation symbol the first two 
arguments of which are of sort $\B$ and the third argument is of sort \Q,
see Fig.\ref{lang}.

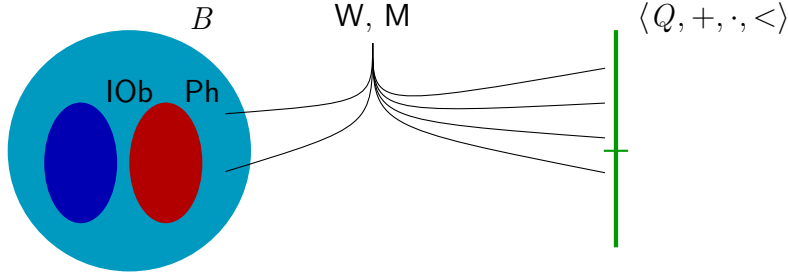
\begin{figure}
\begin{center}
\begin{tikzpicture}[scale=1.6]
\node (B) at (-1.4,1.1) {$\B$};
\node (Q) at (2.8,1.1) {$\langle \Q, +,\cdot,< \rangle$}; 
\node (1) at (-1.3,0.3) {}; 
\node (2) at (-1.3,-0.2) {};
\node (3) at (2.0,-0.2) {};
\node (4) at (2.0,0.1) {};
\node (5) at (2.0,0.4) {};
\node (6) at (2.0,0.7) {};
\fill[lobcolor] (-2.0,0) circle (1);
\fill[lphcolor] (-1.7,-0.1) ellipse  (0.3 and 0.5);
\fill[liobcolor] (-2.4,-0.1) ellipse (0.3 and 0.5);
\node at (-1.4,0.5) {$\Ph$};
\node at (-2.0,0.5) {$\IOb$};
\draw[lqcolor, ultra thick] (2,-0.8) -- (2,1);
\draw[lqcolor, thick] (1.9,0) -- (2.1,0);
\node (W) at (0,1.1) {$\W$,\ $\M$};
\draw (W.south) .. controls (0,0.4)  ..  (1);
\draw (W.south) .. controls (0,0.2)  ..  (2);
\draw (W.south) .. controls (0,0.2)  ..  (3);
\draw (W.south) .. controls (0,0.25)  ..  (4);
\draw (W.south) .. controls (0,0.3)  ..  (5); 
\draw (W.south) .. controls (0,0.35)  ..  (6);
\end{tikzpicture}
\caption{Illustration for the language}
\label{lang}
\end{center}
\end{figure}

Relations $\IOb(k)$ and $\Ph(p)$ are translated as ``\textit{$k$ is
  an inertial observer},'' and ``\textit{$p$ is a light signal or a photon},'' respectively. 
To speak about coordinatization, we
translate $\W(k,b,x_1,x_2,\ldots,x_d)$ as ``\textit{body (observer) $k$
  coordinatizes body $b$ at spacetime location $\langle x_1,
  x_2,\ldots,x_d\rangle$},'' (i.e., at space location $\langle
x_2,\ldots,x_d\rangle$ and instant $x_1$).
Finally we use the mass relation to 
talk about the relativistic masses of
bodies according to inertial observers by reading $\M(k,b,q)$ as
``\textit{the mass of body $b$ is $q$ according to body (observer) $k$.}'' 

{\bf Quantity  terms} are the variables of sort $\Q$ and what can be
built from them by using the two-place operations $+$ and $\cdot$,
{\bf body terms} are only the variables of sort $\B$.
$\IOb(k)$, $\Ph(p)$, $\W(k,b,x_1,\ldots,x_d)$, $\M(k,b,x)$,
$x=y$ and $x< y$  
where $k$, $p$, $b$, $x$, $y$, $x_1$, \ldots, $x_d$ are arbitrary
terms of the respective sorts are so-called {\bf atomic formulas} of
our first-order logic language. The {\bf formulas} are built up from
these atomic formulas by using the logical connectives \textit{not}
($\lnot$), \textit{and} ($\land$), \textit{or} ($\lor$),
\textit{implies} ($\rightarrow$), \textit{if-and-only-if}
($\leftrightarrow$) and the quantifiers \textit{exists} ($\exists$)
and \textit{for all} ($\forall$).

We use the notation $\Q^n$ for the set of all $n$-tuples of elements
of $\Q$. If $\vx\in \Q^n$, we assume that $\vx=\langle
x_1,\ldots,x_n\rangle$, i.e., $x_i$ denotes the
$i$-th component of the $n$-tuple $\vx$. Specially, we write $\W(k,b,\vx)$ in
place of $\W(k,b,x_1,\dots,x_d)$, and we write $\forall \vx$ in place
of $\forall x_1\dots\forall x_d$, etc.

The {\bf models} of
this language are of the form
\begin{equation*}
{\mathfrak{M}} = \langle \B, \Q;
\IOb_\mathfrak{M},\Ph_\mathfrak{M},+_\mathfrak{M},
\cdot_\mathfrak{M},< _{\mathfrak{M}}, \W_\mathfrak{M},\M_\mathfrak{M}\rangle,
\end{equation*}
where $\B$ and $\Q$ are nonempty sets, $\IOb_\mathfrak{M}$ 
and  $\Ph_\mathfrak{M}$ are  unary
relations on $\B$,  $+_\mathfrak{M}$   and $\cdot_\mathfrak{M}$ are binary 
operations and  $<_\mathfrak{M}$ is a binary relation on $\Q$, 
$\W_\mathfrak{M}$
is a subset of $\B\times \B\times \Q^d$ and $\M_{\mathfrak{M}}$ is
a subset of $\B\times\B\times \Q$.  Formulas are interpreted
in $\mathfrak{M}$ in the usual way.  For precise definition of the
syntax and semantics of first-order logic, see, e.g., \cite[\S
  1.3]{CK}, \cite[\S 2.1, \S 2.2]{End}.

We denote that formula $\varphi$ is {\bf valid} in model
$\mathfrak{M}$ by $\mathfrak{M}\models\varphi$.  Formula $\varphi$ 
is {\bf logically implied} by set of formulas, in symbols
$\Sigma\models\varphi$, if{}f (if and only if)  $\varphi$ is
valid in every model of $\Sigma$.

To make our axioms and definitions 
easier to read, we usually omit the outermost universal
quantifiers from our axioms and sometimes we omit them
from the definitions, too, i.e., all the free 
variables are universally quantified.

\section{Axioms for kinematics}
\label{sec-ax}

Here we axiomatize the kinematics of special relativity in our first-order
logic language of Section~\ref{sec-lang}.  Einstein has assumed two postulates
in his 1905 paper \cite{einstein}, the principle of relativity and the light
postulate. The principle of relativity roughly states that the same laws of
nature are true for all inertial observers. Specially they are
indistinguishable from each other by (local) physical experiments, see, e.g.,
Friedman \cite[\S 5]{friedman}.

To formalize the principle of relativity let $\mathcal{P}$ be the  set of formulas of our language with at most one free variable of sort $\B$. Elements of $\mathcal{P}$  play the role of potential ``laws of physics'' in the formulation of the
principle of relativity theory. 
The free variable of sort $\B$ is used
to evaluate these formulas on inertial observers and to check whether they are
valid or not according to the observer in question.  
Now we can formulate the
{\bf strong principle of relativity} 
as the following axiom schema:
\begin{description}
\item[\underline{\ax{SPR^+}}] 
\label{SPR}  
Every potential law of nature
 $\varphi\in \mathcal{P}$ is either true for all the inertial
  observers or false for all of them:
\begin{equation*}
\big\{\, \IOb(k)\land\IOb(h)\rightarrow
\big[\varphi(k,\vx)\leftrightarrow \varphi(h,\vx)\big]
\::\:\varphi\in\mathcal{P}\,\big\}.
\end{equation*}
\end{description}

$\mathcal{P}$
  contains  formulas which may not counted as
laws of nature. Therefore, \ax{SPR^+} 
may be stronger than Einstein's Principle of
Relativity. However, this fact does not concern us now because
  we show here that something does not follow from special relativity,
  and if something does not follow if we use the possibly stronger
  assumption \ax{SPR^+} it does not follow if we use Einstein's
  principle. Let us note here that the difficulty of formulating Einsteins
  principle precisely comes from the fact that the notion of ``laws of
  nature'' is not well-defined.

Einstein assumed without postulating it explicitly that the structure
of quantities is the field of real numbers.  We make this postulate
more general by assuming only the most important algebraic properties
of real numbers for the quantities. 

\begin{description}
\label{ax-fd}
\item[\underline{\ax{AxEField}}] 
 The quantity part $\langle \Q,+,\cdot,< \rangle$ 
is a Euclidean  field, i.e., it is a linearly ordered  
field in the sense of abstract algebra; and 
every positive element has a square root, i.e., $\forall x\,\exists y\,
(x=y^2\lor -x=y^2)$. 
\end{description}

Throughout the paper we assume \ax{AxEField} in our definitions
and axioms without mentioning this explicitly.
We use the usual field operations $0$, $1$, $-$, $/$, and $\gyok$ definable
from $+$ and $\cdot$
within first-order logic. We also use the usual vector-space structure of
$\Q^n$, that is if $\vx,\vy\in\Q^n$ and $q\in\Q$, then $\vx+\vy\in\Q^n$
and $q\cdot\vx\in\Q^n$.

The second postulate of Einstein states that ``Any ray of light
moves in the stationary system of co-ordinates with the determined
velocity $c$, whether the ray be emitted by a stationary or by a
moving body,'' see \cite{einstein}. We can easily formulate this
statement in our first-order logic frame. To do so, let us introduce
the following two concepts.  The {\bf time difference} of coordinate 
points $\vx,\vy\in\Q^d$ is defined as:
\begin{equation*}
\timed(\vx,\vy)\de |x_1-y_1|. 
\end{equation*}
The {\bf spatial distance} of $\vx,\vy\in\Q^d$ is defined as:
\begin{equation*}
  \sqspace(\vx,\vy)\de \sqrt{(x_2-y_2)^2+\ldots+(x_d-y_d)^2}.
\end{equation*}

\begin{description}
\item[\underline{\ax{AxLight}}] 
\label{axlight}
There is \emph{at least one} inertial observer,
  according to whom, any light signal moves with the same speed
  $c$ (independently of the fact that which body emitted the signal).
  Furthermore, it is possible to send out a light signal in any
  direction everywhere
(see Fig.\ref{d13}):
\begin{multline}
\exists kc\Big[\IOb(k)\land 0<c\land
\ \forall\vx\vy\,  
\Big(\exists p \big[ \Ph(p)\land \W(k,p,\vx) \land
\W(k,p,\vy)\big] \leftrightarrow\\ \sqspace(\vx,\vy)=
c\cdot\timed(\vx,\vy)\Big)\Big].
\label{light}
\end{multline}
\end{description}
Axiom \ax{AxLight}, as Einstein's original second postulate, requires
  only the existence of \emph{at least one} inertial observer according to whom all
  light signals move with the same speed. However, by the principle of
  relativity, \ax{AxLight} implies that all light signals move with the same
  speed according to {\em all} the inertial observers.  More precisely axioms
\ax{SPR^+}, \ax{AxLight} and \ax{AxEField} imply that the speed of light is
the same for {\em every} inertial observer in every direction, i.e., formula
(\ref{light}) holds if we replace ``$\exists kc$'' with ``$\exists c\forall
k$'' in it, see \cite[Prop.1]{FTLconsSR}.

\begin{figure}
\begin{center}
\small
\psfrag{ek}[r][r]{$\exists c\exists k$}
\psfrag{fx}[r][r]{$\vx$}
\psfrag{fy}[r][r]{$\vy$}
\psfrag{time}[l][l]{$\timed(\vx,\vy)$}
\psfrag{space}[lt][lt]{$\sqspace(\vx.\vy)$}
\psfrag{ep}[l][l]{$\exists p\; \Leftrightarrow\; \sqspace=c\cdot\timed$}
\psfrag{Ph}[r][r]{\ax{AxLight}}

\psfrag{Self}[r][r]{\ax{AxSelf}}
\psfrag{wl}[r][r]{$k$}

\psfrag{Ev}[c][c]{\ax{AxEv}}
\psfrag{fx}[r][r]{$\vx$}
\psfrag{ey}[l][l]{$\exists \vy$}
\psfrag{text1}[c][c]{$\ev_k(\vx)=\ev_h(\vy)$}
\psfrag{k}[r][r]{$k$}
\psfrag{h}[r][r]{$h$}

\psfrag*{p}[b][b]{$p$}
\psfrag{e1}[t][t]{$e_1$}
\psfrag{e2}[t][t]{$e_2$}
\psfrag{1e}[l][l]{$e_1$}
\psfrag{2e}[l][l]{$e_2$}
\psfrag{sym}[c][c]{\ax{AxSymD}}
\includegraphics[keepaspectratio,width=0.8\textwidth]{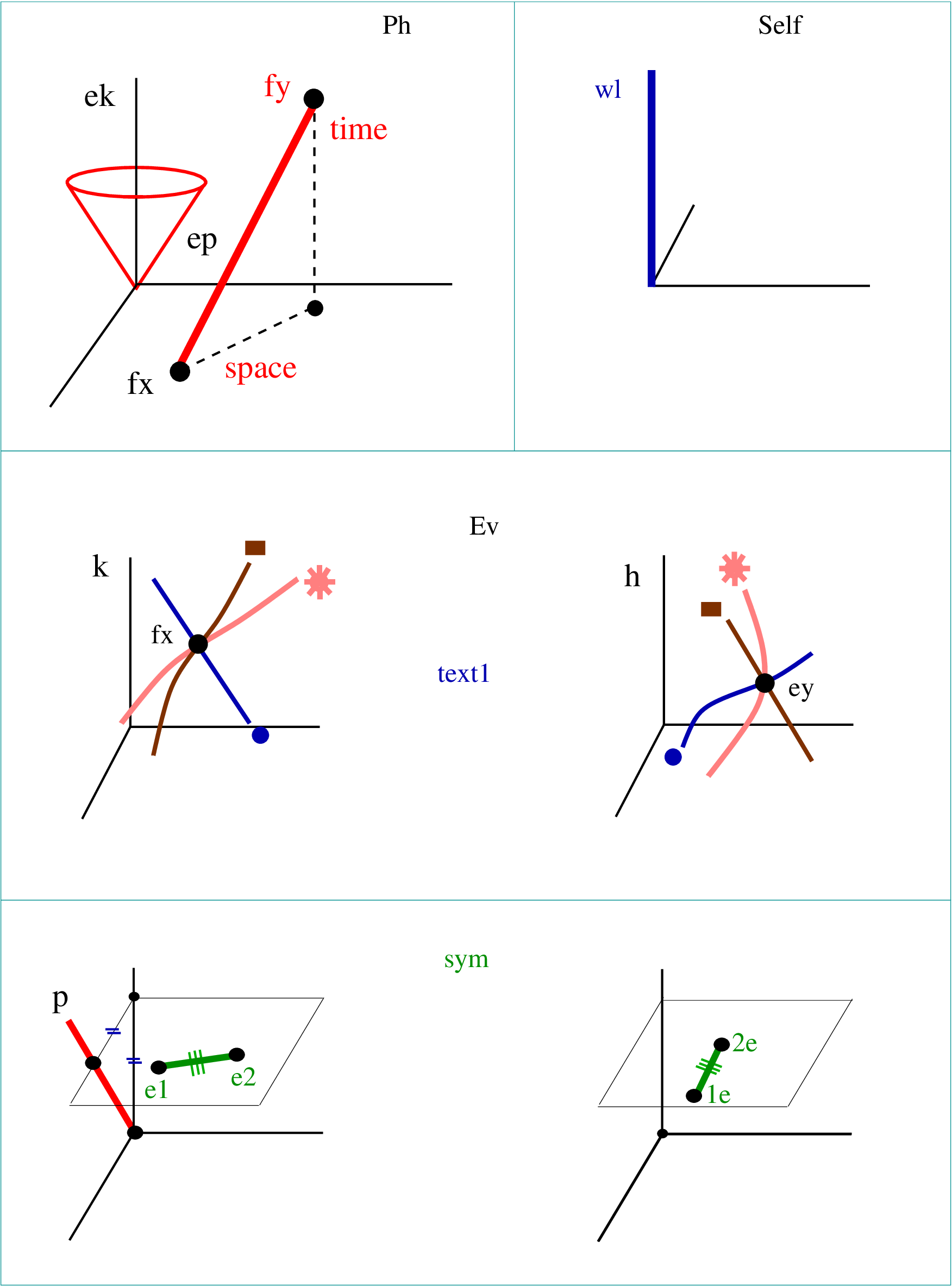}

\caption{Illustration for the axioms of kinematics}
\label{d13}
\end{center}
\end{figure}

As any other approach to relativity theory, we also
  assume that inertial observers coordinatize the same ``external'' reality
(the same set of events).  By the {\bf event} occurring for inertial observer
$k$ at coordinate point $\vx$, we mean the set of bodies $k$
coordinatizes at $\vx$:
\begin{equation*}
\ev_k(\vx)\de\{ b : \W(k,b,\vx)\}.
\end{equation*}

\begin{description}
\item[\underline{\ax{AxEv}}]
\label{AxEv}
All inertial observers coordinatize the same set of events
(see Fig.\ref{d13}):
\begin{equation*}
\IOb(k)\land\IOb(h)\rightarrow \exists \vy\, \forall b
\big[\W(k,b,\vx)\leftrightarrow\W(h,b,\vy)\big].
\end{equation*}
\end{description}

From now on, we  use $\ev_k(\vx)=\ev_h(\vy)$ to abbreviate the
subformula\\ $\forall b\,[\W(k,b,\vx)\leftrightarrow\W(h,b,\vy)]$ of
\ax{AxEv}.

Basically we are ready for formulating the kinematical part of 
Einstein's  special relativity theory within our 
axiomatic framework. Nevertheless, let us
introduce two more simplifying axioms.

\begin{description}
\item[\underline{\ax{AxSelf}}]
\label{axself}
Any inertial observer is stationary according to its own coordinate
system (see Fig.\ref{d13}):
\begin{equation*}
\IOb(k)\rightarrow \forall \vx\big[\W(k,k,\vx) \leftrightarrow
x_2=\ldots=x_d=0\big].
\end{equation*}
\end{description}

Axiom \ax{AxSelf} makes it easier to speak about the motion of
inertial observers since it identifies the observers with their
time-axes. So instead of always referring to the time-axes of
inertial observers we can speak about their motion directly.

Our last axiom on kinematics is a symmetry axiom saying that
all inertial observers use the same units of measurement.
\begin{description}
\item[\underline{\ax{AxSymD}}]
\label{AxSymD}
Any two inertial observers agree as to the spatial distance between
two events if these two events are simultaneous for both of them; and
the speed of light is 1 for all inertial observers (see Fig.\ref{d13}):
\begin{multline*}
\IOb(k)\land\IOb(h) \land x_1=y_1\land
x'_1=y'_1\land \ev_k(\vx)=\ev_h(\vx')\land\ev_k(\vy)=\ev_h(\vy')\\
\rightarrow\quad \sqspace(\vx,\vy)=\sqspace(\vx',\vy'),
\text{ and }\\
\IOb(k)\rightarrow\exists
p\big[\Ph(p)\land\W(k,p,0,\ldots,0)\land\W(k,p,1,1,0,\ldots,0)\big].
\end{multline*}
\end{description}

Axiom \ax{AxSymD}  
simplifies the formulation of
our theorems because we do not have to consider situations such as
when one observer measures distances in meters while another observer
measures them in feet.

Let us now introduce an axiom system \ax{SR} 
for kinematics of special relativity as the
collection of the axioms above:
\begin{equation*}
\ax{SR}  \de  \ax{SPR^+}\cup\{\ax{AxLight},\ax{AxEField},\ax{AxEv},\ax{AxSelf},
\ax{AxSymD}\}.
\end{equation*}

Let us note that usually 
a more general axiom system called \ax{SpecRel} is used 
for axiomatizing special relativity, see,  e.g., \cite{logst,synthese,FTLconsSR}.
\ax{SpecRel}   
 does not contain \ax{SPR^+} but it captures the
    kinematics of special relativity, i.e., it implies that the worldview 
   transformations are Poincar\'e ones. 
  It is proved in \cite{FTLconsSR} that 
  \ax{SpecRel} is more general than \ax{SR}.

To characterize the possible
relations between the worldviews of inertial observers, let us
introduce the {\bf worldview transformation} between observers $k$ and
$h$ (in symbols, $\w_{kh}$) as the binary relation on $\Q^d$
connecting the coordinate points where $k$ and $h$ coordinatize the
same events:
\begin{equation*}\label{eq-ww}
\w_{kh}(\vx,\vy)\defiff\ev_k(\vx)=\ev_h(\vy). 
\end{equation*}

Map $P:\Q^d\rightarrow\Q^d$ is called a {\bf Poincar\'e transformation} if{}f
it is an affine bijection having the following property:
\begin{equation*}
\timed(\vx,\vy)^2-\sqspace(\vx,\vy)^2=\timed(\vx',\vy')^2-
\sqspace(\vx',\vy')^2
\end{equation*}
for all $\vx,\vy,\vx',\vy'\in\Q^d$ for which  $P(\vx)=\vx'$ and $P(\vy)=\vy'$.

\begin{thm}\label{thm-poi}
Let $d\ge3$. Assume \ax{SR}. Then $\w_{kh}$ is a Poincar\'e
transformation if $k$ and $h$ are inertial observers.
\end{thm}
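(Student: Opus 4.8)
The plan is to verify directly that $\w_{kh}$ has the two defining properties of a Poincar\'e transformation: it is an affine bijection of $\Q^d$, and it preserves the Minkowski quantity $\timed(\vx,\vy)^2-\sqspace(\vx,\vy)^2$. I would proceed in four steps. \emph{Step 1 (bijectivity).} Totality of $\w_{kh}$ and of its converse $\w_{hk}$ are immediate from \ax{AxEv} applied to the pairs $(k,h)$ and $(h,k)$, so it remains to see that $\vx\mapsto\ev_k(\vx)$ is injective for every inertial $k$. For this I would use \ax{AxLight} together with \ax{SPR^+}: by the principle of relativity the biconditional in \ax{AxLight} holds for every inertial observer with one common constant $c$ (cf.\ \cite[Prop.\ 4.2]{FTLconsSR}), and the second conjunct of \ax{AxSymD} forces $c=1$. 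Hence, for every inertial $k$, the coordinate points at which $k$ sees a fixed photon are pairwise light-like separated and therefore collinear (here one uses that the spatial part is positive definite, i.e.\ \ax{AxEField}); so each photon's trace lies on a light-like line, and conversely every light-like line carries a photon and every coordinate point lies on one. Given $\vx_1\neq\vx_2$, a photon lying on a light-like line through $\vx_1$ but not through $\vx_2$ belongs to $\ev_k(\vx_1)\setminus\ev_k(\vx_2)$, proving injectivity.

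\emph{Step 2 (light cones are preserved).} Since, for every inertial observer, the formula $\exists p\,[\Ph(p)\land\W(k,p,\vx)\land\W(k,p,\vy)]$ is equivalent to $\sqspace(\vx,\vy)=\timed(\vx,\vy)$, and since a photon lies in $\ev_k(\vx)$ iff it lies in $\ev_h(\w_{kh}(\vx))$ (these being the same set of bodies), I obtain
\[
\sqspace(\vx,\vy)=\timed(\vx,\vy)\ \Longleftrightarrow\ \sqspace\big(\w_{kh}(\vx),\w_{kh}(\vy)\big)=\timed\big(\w_{kh}(\vx),\w_{kh}(\vy)\big).
\]
As the argument is symmetric in $k$ and $h$, this is a genuine equivalence, so $\w_{kh}$ is a bijection of $\Q^d$ preserving light-like separatedness in both directions.

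\emph{Step 3 (affinity).} This is the heart of the proof and the step I expect to be the main obstacle. I would invoke the Alexandrov--Zeeman theorem in the form valid over ordered Euclidean fields: for $d\ge 3$, every bijection of $\Q^d$ preserving light-like separatedness in both directions is affine and multiplies $\timed(\vx,\vy)^2-\sqspace(\vx,\vy)^2$ by a fixed nonzero factor $\mu^2$. This is exactly where the hypothesis $d\ge 3$ is needed: for $d=2$ the light cone degenerates into two lines and such bijections need not even be affine. If this is not taken as given, a self-contained argument would first show that $\w_{kh}$ sends light-like lines to light-like lines, then that it sends all lines to lines, and then apply the fundamental theorem of affine geometry over the Pappian field $\Q$.

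\emph{Step 4 (the scale factor is $1$).} Let $L$ be the linear part of the affine map $\w_{kh}$. The sets $\{\vec v:v_1=0\}$ and $\{\vec v:(L\vec v)_1=0\}$ are two hyperplanes of $\Q^d$, so for $d\ge 3$ their intersection contains a nonzero vector $\vec v$. Fixing any $\vx$ and setting $\vy=\vx+\vec v$, the points $\vx,\vy$ are simultaneous for $k$ and their images $\w_{kh}(\vx),\w_{kh}(\vy)$ are simultaneous for $h$, so \ax{AxSymD} gives $\sqspace(\vx,\vy)=\sqspace\big(\w_{kh}(\vx),\w_{kh}(\vy)\big)$; since the relevant time differences both vanish and $\sqspace(\vx,\vy)>0$, the scaling identity of Step 3 forces $\mu^2=1$. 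Thus $\w_{kh}$ preserves $\timed(\vx,\vy)^2-\sqspace(\vx,\vy)^2$ and, being an affine bijection, is a Poincar\'e transformation.
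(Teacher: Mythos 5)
The first thing to note is that the paper itself contains no proof of Theorem~\ref{thm-poi}: it is quoted as a known result, with the remark following it pointing to \cite{wnst}, \cite{pezsgo}, \cite{AMNSamples}, \cite{logst}. Your outline follows the same standard route those sources take: Steps 1 and 2 are essentially right (via \cite[Prop.~4.2]{FTLconsSR} the light-cone characterization holds for every inertial observer with $c=1$ by the second part of \ax{AxSymD}; photon traces are light-like lines by the pairwise-light-like-implies-collinear fact; \ax{AxEv} plus injectivity of $\vx\mapsto\ev_k(\vx)$ makes $\w_{kh}$ a bijection preserving light-like separation in both directions), and Step 4's use of the intersection of the two simultaneity hyperplanes (this is where $d\ge 3$ enters again) together with \ax{AxSymD} is the right way to normalize a conformal factor.

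The genuine gap is in Step 3. The form of the Alexandrov--Zeeman theorem you invoke --- every bijection of $\Q^d$, $d\ge3$, preserving light-like separatedness in both directions is \emph{affine} and rescales $\timed(\vx,\vy)^2-\sqspace(\vx,\vy)^2$ by a fixed factor --- is true over the reals but false over arbitrary Euclidean fields, and \ax{AxEField} (indeed Theorem~\ref{masodikfotetel}, which realizes every Euclidean field as the quantity part of a model of \ax{SR}) forces you to work in that generality. A non-Archimedean Euclidean (e.g.\ real closed) field can carry a nontrivial order-preserving automorphism $\alpha$, and the coordinatewise map $\vx\mapsto\langle\alpha(x_1),\ldots,\alpha(x_d)\rangle$ is a bijection preserving the relation $\sqspace(\vx,\vy)=\timed(\vx,\vy)$ while being only semi-affine, not affine; so affinity simply does not follow from light-cone preservation alone. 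The correct field version of the theorem concludes that the map is a Poincar\'e transformation composed with a dilation \emph{and} a field-automorphism-induced collineation, and your fallback via the fundamental theorem of affine geometry has the same limitation, since over a general field it too yields only semi-affine maps. Hence Step 4, which presupposes affinity and eliminates only the factor $\mu^2$, does not close the argument. The repair is to push \ax{AxSymD} harder at exactly the point where you use it: on the (at least one-dimensional) set of displacement directions simultaneous for both observers, spatial distances of doubly-simultaneous pairs realize all values $q\cdot\ell_0$ with $q>0$ for some fixed $\ell_0>0$, and comparing them with the distances of their images gives $\lambda\cdot\alpha(q\,\ell_0)=q\,\ell_0$ for all $q>0$, which forces $\alpha=\Id$ and $\lambda=1$, i.e.\ kills the automorphism together with the dilation. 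With that amendment your sketch matches the proofs the paper defers to.
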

We note that Thm.\ref{thm-poi} also holds, if we replace \ax{SR}
with the  more general axiom system \ax{SpecRel}, see, e.g., 
\cite{MSZRac}.
For versions of Theorem~\ref{thm-poi} using a similar
but different axiom systems of special relativity, see, e.g.,
\cite{pezsgo,AMNSamples,logst}.

Let $\FTL(k,b)$ be the following formula saying that
body $b$ moves FTL according to inertial observer $k$:
\begin{equation}
\label{FTL}
\FTL(k,b)\ \defiff\
\IOb(k)\land \exists \vx\vy[\W(k,b,\vx)\land\W(k,b,\vy)\land
\timed(\vx,\vy)<\sqspace(\vx,\vy)].
\end{equation}

Let \ax{\exists\FTL\IOb} be the following formula saying that there is
an FTL inertial observer:
\begin{equation*}\ax{\exists\FTL\IOb}\ \defiff\ \exists kh\  
[\IOb(h)\land \FTL(k,h)].
\end{equation*}

\noindent
By Thm.\ref{thm-poi}, 
\ax{SR} implies that there are no FTL inertial observers:

\begin{cor}
\label{noftlobsthm} 
Assume $d\geq 3$. Then $\ax{SR}\models\neg\ax{\exists\FTL \IOb}$.
\end{cor}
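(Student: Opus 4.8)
The plan is to derive a contradiction from assuming $\ax{SR}$ together with $\ax{\exists\FTL\IOb}$. Suppose there are inertial observers $k$ and $h$ with $\IOb(h)$ and $\FTL(k,h)$. Since $\ax{AxSelf}$ identifies each inertial observer with its own time-axis, the worldline of $h$ according to $k$ is exactly the image under $\w_{hk}$ of the $t$-axis $\{\vx : x_2=\dots=x_d=0\}$. By Theorem~\ref{thm-poi} (applicable since $d\ge 3$), $\w_{hk}$ is a Poincar\'e transformation, hence affine and preserving the Minkowski quadratic form $\timed(\vx,\vy)^2-\sqspace(\vx,\vy)^2$.

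First I would pick two points on $h$'s time-axis, say $\vo=\langle 0,\dots,0\rangle$ and $\vu=\langle 1,0,\dots,0\rangle$, so that $\timed(\vo,\vu)^2-\sqspace(\vo,\vu)^2=1>0$. Their $\w_{hk}$-images $\vx,\vy$ then lie on $h$'s worldline according to $k$, and by the form-preserving property $\timed(\vx,\vy)^2-\sqspace(\vx,\vy)^2=1>0$, so $\timed(\vx,\vy)>\sqspace(\vx,\vy)$. But $\FTL(k,h)$ asserts the existence of two points $\vx',\vy'$ on $h$'s worldline according to $k$ with $\timed(\vx',\vy')<\sqspace(\vx',\vy')$, i.e.\ $\timed(\vx',\vy')^2-\sqspace(\vx',\vy')^2<0$. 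Since $h$'s worldline according to $k$ is the affine image of a line (the $t$-axis), any two of its points are of the form $P(\vp),P(\vq)$ for $\vp,\vq$ on the $t$-axis, and for such points $\timed(\vp,\vq)^2-\sqspace(\vp,\vq)^2=(p_1-q_1)^2\ge 0$; by form-preservation the same nonnegativity holds for $\timed(\vx',\vy')^2-\sqspace(\vx',\vy')^2$, contradicting $\FTL(k,h)$.

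To make this airtight I would state as the key lemma that if $\ell$ is a line on which $\timed(\vp,\vq)^2-\sqspace(\vp,\vq)^2\ge 0$ for all $\vp,\vq\in\ell$ (a ``timelike or lightlike line''), and $P$ is a Poincar\'e transformation, then $P[\ell]$ has the same property; this is immediate from the definition of Poincar\'e transformation plus the fact that an affine bijection sends lines to lines. The $t$-axis is such a line by the computation above, so $h$'s worldline according to $k$, being $\w_{hk}[\taxis]$ with $\w_{hk}$ Poincar\'e (note $\w_{hk}=\w_{kh}^{-1}$ is also Poincar\'e), is such a line too, directly negating $\FTL(k,h)$.

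The only delicate point is the bookkeeping that $h$'s worldline according to $k$ really equals $\w_{hk}[\taxis]$: this needs $\ax{AxSelf}$ applied to $h$ (so that $\W(h,h,\vx)\leftrightarrow x_2=\dots=x_d=0$) together with the definition $\w_{hk}(\vx,\vy)\Leftrightarrow\ev_h(\vx)=\ev_k(\vy)$ and $\ax{AxEv}$ (to guarantee the relevant images exist), plus the fact that $\w_{hk}$ is a function (a consequence of it being a Poincar\'e transformation, hence a bijection). I expect this identification to be the main obstacle only in the sense of requiring care; the geometric heart of the argument — timelike lines map to timelike lines under Poincar\'e transformations — is routine. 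Everything else follows from Theorem~\ref{thm-poi} and the definitions.
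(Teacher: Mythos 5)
Your proof is correct and follows essentially the same route as the paper, which derives the corollary directly from Theorem~\ref{thm-poi}: you simply supply the routine details the paper leaves implicit, namely the identification $\wl_k(h)=\w_{hk}[\taxis]$ via \ax{AxSelf} and \ax{AxEv}, and the fact that a Poincar\'e transformation preserves the sign of $\timed^2-\sqspace^2$, so the image of the time-axis can contain no pair of points witnessing $\FTL(k,h)$.
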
 

We note that, by Thm.\ref{fotetel} on p.\pageref{fotetel},
\ax{SR} does not imply that there are no FTL inertial particles.

We  need the following concepts of kinematics in our axioms for
dynamics.  The {\bf world-line} of body $b$ according to observer $k$
is defined as:
\begin{equation*}
\wl_k(b)\de\{ \vx: \W(k,b,\vx)\}.
\end{equation*}

Body $b$ is called {\bf inertial}
 if for every inertial observer
the world-line of body $b$ is at least two element subset of a straight-line,
formally:
\begin{multline*}
\IOb(k)\ \rightarrow\ \exists \vx\vy\ \Big[\vx\neq\vy\land
\W(k,b,\vx)\land\W(k,b,\vy)\land\\
\big(\W(k,b,\vz)\rightarrow \exists q\ [\Q(q)\land \vz= 
\vx+q\cdot(\vx-\vy)]\big)\Big].
\end{multline*}
The {\bf velocity} $\vel_k(b)$ and the 
{\bf speed} $\vv_k(b)$ of 
inertial body $b$ according to inertial
observer $k$ are defined as follows. Let  
$\vx,\vy$ be such that $\W(k,b,\vx)$, $\W(k,b,\vy)$ and $x_1\neq y_1$. Then 
\begin{equation*}
\label{speed-def}
\vel_k(b)  \leteq  \frac{\langle x_2-y_2,\ldots, x_d-y_d\rangle}{x_1-y_1}
\quad\text{ and }\quad
\vv_k(b)  \leteq  \frac{\sqspace(\vx,\vy)}{\timed(\vx,\vy)},
\end{equation*}
and if there are no such $\vx$ and $\vy$, then $\vel_k(b)$ and
$\vv_k(b)$ are undefined.
For {\it inertial} bodies these are well defined concepts since they do 
not depend on the choice of $\vx$ and $\vy$. $\vv_k(b)<\infty$ abbreviates
that $\vv_k(b)$ is defined, i.e., $\vv_k(b) < \infty$ if{}f $\exists\vx\vy\ 
[\W(k,b,\vx)\land\W(k,b,\vy) \land x_1\neq y_1]$.
We  say that the speed of inertial body $b$ according to
observer $k$ is finite if{}f $\vv_k(b)<\infty$.

\section{Axioms for dynamics}
\label{sec-dinax} 

In this section, we introduce  axioms for dynamics of 
special relativity,  which  are some natural assumptions on 
collisions of inertial particles
and they concern FTL particles, too. 

To introduce the notion of collisions of particles we need some
definitions.  The relativistic mass of body $b$ according to inertial
observer $k$, in symbols $\m_k(b)$, is defined to be $q$ if
$\M(k,b,q)$ holds and there is only one such $q\in\Q$; otherwise
$\m_k(b)$ is undefined. Here we are interested in inertial bodies
having relativistic masses.  Body $b$ is called {\bf inertial
  particle}, in symbols $\IB (b)$, if{}f  $b$ is an inertial body and 
$\m_k(b)$ is defined 
for every inertial observer $k$.

Body $b$ is {\bf incoming} 
{\bf (outgoing)} at coordinate point $\vx$ according to inertial observer $k$,
in symbols $\inc_k(b,\vx)$ ($\out_k(b,\vx)$), 
if{}f $b$ is an inertial particle,
$\vx$ is on the world-line of $b$, and the time component 
of each coordinate point on the world-line of $b$ different from $\bar x$
is less than 
(greater than) the time component of $\vx$ (see the left-hand side of
Fig.\ref{ppcoll} and Fig.\ref{ppcoll1}): 
\begin{eqnarray*}
\inc_k(b,\vx) & \defiff & \IB(b)\lland 
\W(k,b,\vx)\lland \big(\W(k,b,\vy)\rightarrow  [\vy=\vx\llor y_1< x_1]\big)
\label{inc-def},\\
\out_k(b,\vx) & \defiff & \IB(b)\lland
\W(k,b,\vx)\lland
\label{out-def}
\big(\W(k,b,\vy)\rightarrow
[\vy=\vx\llor x_1< y_1]\big).
\end{eqnarray*}

\begin{figure}
\small
\psfrag*{inc}[t][t]{$\inc_k(b_i,\vx)$ (incoming)}
\psfrag*{out}[b][b]{$\out_k(b_i,\vx)$ (outgoing)}
\psfrag*{b1}[r][r]{$b_1$}
\psfrag*{b2}[r][r]{$b_2$}
\psfrag*{b3}[r][r]{$b_3$}
\psfrag*{b4}[l][l]{$b_4$}
\psfrag*{b5}[l][l]{$b_5$}
\psfrag*{pcoll}[l][l]{$\poscoll_k(b_1\dots b_5)\; \Leftrightarrow\; A+B+C=D+E$}
\psfrag*{A}[l][l]{$A$}
\psfrag*{B}[l][l]{$B$}
\psfrag*{C}[l][l]{$C$}
\psfrag*{D}[r][r]{$D$}
\psfrag*{E}[r][r]{$E$}
\psfrag*{x}[l][l]{$\vx$}
\psfrag*{a}[r][r]{$a$}
\psfrag*{b}[l][l]{$b$}
\psfrag*{c}[l][l]{$c$}
\psfrag*{a1}[l][l]{$A$}
\psfrag*{bb}[l][l]{$B$}
\psfrag*{c1}[r][r]{$C$}
\psfrag*{inecoll}[l][l]{$\vx\mbox{-}\inecoll_k(ab)$}
\psfrag*{k}[r][r]{$k$}

\includegraphics[keepaspectratio,width=\textwidth]{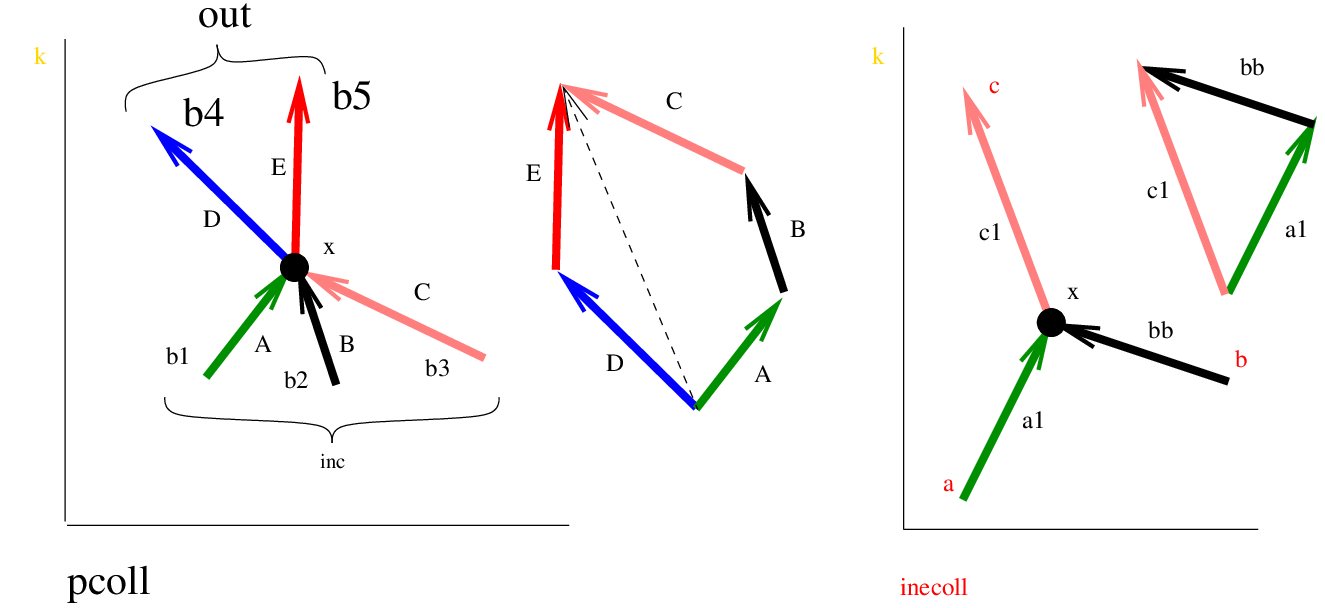}
\caption{Illustration for incoming, outgoing, possible collision 
and inelastic collision of bodies; the vectors ($A,B,C,D,E\in\Q^d$) in 
the figure are the four-momenta of inertial particles, i.e., 
$\langle \m_k(b_i),\m_k(b_i)\cdot\vel_k(b_i)\rangle$, cf.\ (\ref {fourm-d})
and Fig.\ref{fmom}}
\label{ppcoll}
\end{figure}

Let us define the  {\bf possible collisions} of 
bodies as follows. 
Bodies $b_1,\ldots, b_n$ form a possible collision according to observer $k$
if there is a 
coordinate point such that all the bodies are incoming or outgoing
in that coordinate point and  the sum of the relativistic masses of the 
incoming bodies coincides with that of the outgoing ones, and the same holds 
for the linear momenta of the bodies (see Fig.\ref{ppcoll}):
\begin{multline}
\label{pcoll-e} 
\poscoll_k(b_1\ldots b_n)\ \defiff\ 
\exists\vx \Big[\bigwedge_{i=1}^{n} [\inc_k(b_i,\vx)\vee\out_k(b_i,\vx)]\ \land\\
\sum_{\{i\, :\, \inc_k(b_i,\vx)\}}
\m_k(b_i)=\sum_{\{i\, :\,\out_k(b_i,\vx)\}} \m_k(b_i)
\ \land\\ 
 \sum_{\{i\, :\, \inc_k(b_i,\vx)\}}  \m_k(b_i)\cdot \vel_k(b_i)=
\sum_{\{i\, :\,\out_k(b_i,\vx)\}} \m_k(b_i)\cdot \vel_k(b_i)\Big].
\end{multline}

Let us note that, if bodies $b_1,\ldots,b_n$ form a possible collision, then
they are inertial particles by the definition of incoming and outgoing
particles.

For every natural number $n$ we introduce 
an axiom 
saying that possible collisions formed by
$n$ bodies do not depend on the inertial observer. Thus conservations of relativistic
mass and linear momentum do not depend on the inertial observer.

\begin{description}
\item[\underline{\ax{AxColl_n}}]
\label{coll}
 If bodies $b_1,\dots, b_n$ form a 
possible collision 
for an inertial observer,  they form a possible collision
for every inertial observer according to whom the speed of each of them
is finite (see Fig.\ref{d12}):  

\begin{equation*}
\IOb(k)\land\IOb(h)\land\bigwedge_{i=1}^{n}\sqspeed_h(b_i)<\infty\land
\poscoll_k(b_1\ldots b_n)\ \rightarrow\ \poscoll_h(b_1\ldots b_n).
\end{equation*}
\end{description}

\noindent
Let \ax{Coll} be the axiom schema containing \ax{AxColl_n} for every
natural number $n$:

\begin{description} 
\item[\underline{\ax{Coll}}] 
\label{axcoll}
Possible collisions do not depend on the inertial observer:
\begin{equation*}
\ax{Coll}\leteq\{\,\ax{AxColl_n}\::\:n \text{ is a natural number}\,\}. 
\end{equation*}
\end{description}

Bodies $a$ and $b$ {\bf collide inelastically}  
according to inertial observer $k$ at coordinate point $\vx$, 
in symbols $\vx\text{-}\inecoll_{k}(ab)$, if{}f there is a body $c$ such
that $a,b,c$ form a possible collision, $a,b$ are incoming and $c$ is  
outgoing at $\vx$ (see the right-hand side of 
Fig.\ref{ppcoll}):
\begin{equation}
\label{xinecoll-d}
\vx\text{-}\inecoll_k(ab)\ \defiff\
 \exists c\ [\poscoll_k(abc)\land
\inc_k(a,\vx)\land\inc_k(b,\vx)\land
\out_k(c,\vx)].
\end{equation}

\begin{figure}
\begin{center}
\small
\psfrag*{k}[r][r]{$k$}
\psfrag*{i}[r][r]{$\forall k$}
\psfrag*{k}[r][r]{$k$}
\psfrag*{fa}[r][r]{$\forall a$}
\psfrag*{fb}[l][l]{$\forall b$}
\psfrag*{fp}[l][l]{$\forall\vx$}
\psfrag*{ea}[r][r]{$\exists\; {a'}$}
\psfrag*{eb}[l][l]{$\exists\; {b'}$}
\psfrag*{p}[lb][lb]{$\vx$}
\psfrag*{text}[cb][cb]{$\ldots\m_k(a)+\m_k(b)\neq 0\ \rightarrow\ 
\exists  a'b'\ \vx\text{-}\inecoll(a'b')\ldots$}

\includegraphics[keepaspectratio, width=0.9\textwidth]{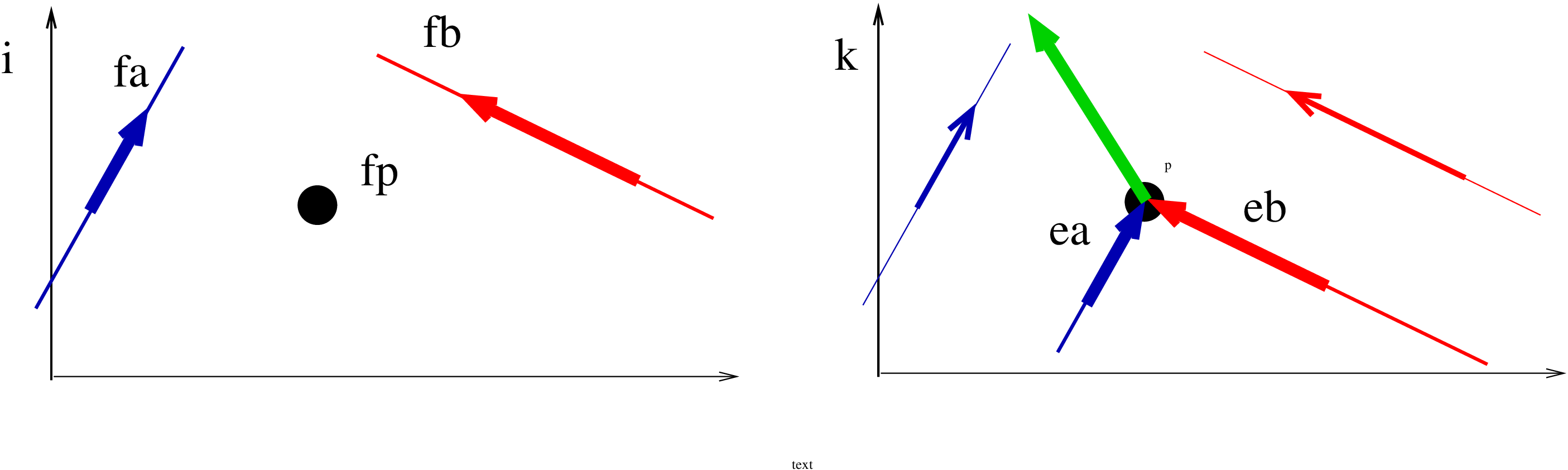}
\caption{Illustration for \ax{Ax\forall\inecoll}}
\label{inec3}
\end{center}
\end{figure}

By the following axiom,  
particles can be collided inelastically at any coordinate point.

\begin{description}
\item[\underline{\ax{Ax\forall \inecoll}}] 
\label{axinecoll}
For every inertial observer, every coordinate
point and every two inertial particles $a$ and $b$, if the sum of
their relativistic masses is nonzero and their speeds are finite,
there are inertial particles $a'$ and $b'$ such that they collide
inelastically at the given coordinate point and the relativistic
masses and velocities of $a'$ and $b'$ coincide with those of $a$ and
$b$, respectively (see Fig.\ref{inec3} and Fig.\ref{d12}):
\begin{multline*}
\IOb(k)\land\Ip(a)\land\Ip(b)\land
\sqspeed_k(a)<\infty\land\sqspeed_k(b)<\infty\land
\m_k(a)+\m_k(b)\neq 0\ \rightarrow\\ \exists a'b'\ 
[\vx\text{-}\inecoll_k(a'b')\land\\
\m_k(a')=\m_k(a)\land
\m_k(b')=\m_k(b)\land\vel_k(a')=\vel_k(a)\land\vel_k(b')=\vel_k(b)].
\end{multline*}
\end{description}

We assume that relativistic
masses of slower than light inertial particles depend only on their
speeds.

\begin{description}  
\item[\underline{\ax{AxSpd}}] 
\label{axspeed}
If an inertial
particle is moving with the same slower than light speed
  according to two inertial observers, then the relativistic masses of the
particle are the same for them (see Fig.\ref{d12}):
\begin{equation*}
\IOb(k)\land\IOb(h)\land\IB(b)\land
\vv_k(b)=\vv_h(b)<1\ 
\rightarrow\  \m_k(b) = \m_h(b).
\end{equation*}
\end{description}

We also assume that, if two inertial particles
  have the same velocities and relativistic masses according to an
  inertial observer, then they have the same relativistic masses
  according to every inertial observer.
\begin{description}
\item[\underline{\ax{AxMass}}]
\label{axmass} 
If the relativistic masses and velocities of two inertial 
particles  coincide
for an inertial observer, then their relativistic masses coincide for 
every inertial observer 
(see Fig.\ref{d12}): 
\begin{equation*}
\IOb(k)\land\IOb(h)\land\IB(a)\land\IB(b)\land
\m_k(a)=\m_k(b)\land\vel_k(a)=\vel_k(b)\ \rightarrow\  \m_h(a)=\m_h(b).
\end{equation*}
\end{description}

\begin{figure}[h!btp]
\small
\psfrag{k}[r][r]{$k$}
\psfrag{h}[r][r]{$h$}
\psfrag{b}[r][r]{$b$}
\psfrag{lb}[l][l]{$b$}
\psfrag{lc}[l][l]{$a$}
\psfrag*{ea}[r][r]{$\exists a'$}
\psfrag*{eb}[l][l]{$\exists b'$}

\psfrag{AxMass}[lt][lt]{\ax{AxMass}\quad
$\m_k(a)=\m_k(b)\rightarrow\m_h(a)=\m_h(b)$}

\psfrag{AxDyn}[lt][lt]{\ax{AxSpd}\quad 
$\vv_k(b)=\vv_h(b)\rightarrow\m_k(b)=\m_h(b)$}

\psfrag{AxTrColl}[lt][lt]
{\ax{Coll}, \ax{AxColl_4}\quad $\poscoll_k(abcd)\rightarrow 
\poscoll_h(abcd)$}
\psfrag{text}[lt][lt]{$\exists a'b'\ \vx\text{-}\inecoll_{k}(a'b')$} 

\psfrag{ma}[r][r]{$a$}
\psfrag{mb}[l][l]{$b$}
\psfrag{mx}[l][l]{$\vx$}
\psfrag{C}[b][b]{$c$}
\psfrag{D}[b][b]{$d$}
\psfrag{A}[t][t]{$a$}
\psfrag{B}[t][t]{$b$}
\psfrag{tC}[t][t]{$c$}
\psfrag{Axinecoll}[lt][lt]{\ax{Ax\forall inecoll}}
\includegraphics[keepaspectratio,width=\textwidth]{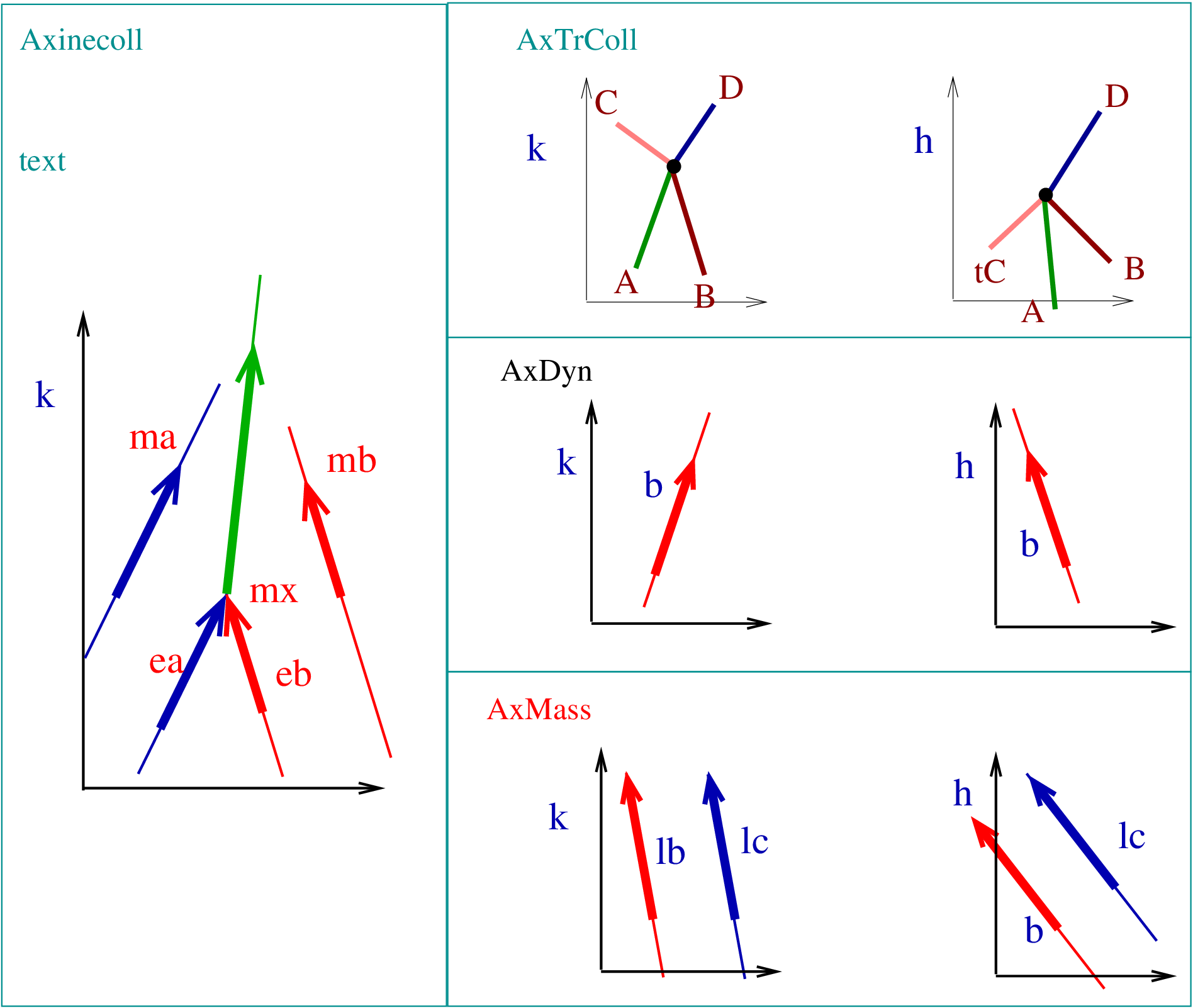}
\caption{Illustration for axioms of dynamics}
\label{d12}
\end{figure}
 
To avoid trivial models, we also assume that there are inertial
observers moving relative to each other and there are inertial
particles of arbitrary positive relativistic masses and arbitrary
non-FTL velocities. 


\begin{description}
\item[\underline{\ax{AxThEx^+}}] 
\label{axthexp}
Inertial observers can move along any
straight line of  slower than light speed 
and inertial particles
of arbitrary positive relativistic masses  can move along any
straight line of non-FTL speed:
\begin{equation*}
\big(\IOb(k)\land \sqspace(\vx,\vy)<\timed(\vx,\vy)\  \rightarrow\ 
\exists h\ [\IOb(h)\land\W(k,h,\vx)\land\W(k,h,\vy)]\big)\ \land
\end{equation*}
\begin{multline*}
\big(\IOb(k)\land\sqspace(\vx,\vy)\leq \timed(\vx,\vy)\land 0<q\
\rightarrow\\
\exists b\ [\Ip(b)\land \W(k,b,\vx)\land\W(k,b,\vy)\land
\m_k(b)=q]\big).
\end{multline*}
\end{description}

 By the following axiom, every potential collision can be realized.
\begin{description}
\item[\underline{\ax{Ax\forall Coll}}] 
\label{axminden}
For every inertial observer, coordinate point and inertial
particle $a$ of finite speed, there is an inertial particle $b$ such that
the relativistic mass and velocity of $b$ coincide with those of $a$, and
$b$ is outgoing (incoming)  at the given coordinate point.
\begin{align*}
\IOb(k)\land\Ip(a)\land &  \vv_k(a)<\infty\  \rightarrow\\ 
 \Big( & \exists b\,[\out_k(b,\vx)\land\m_k(b)
=\m_k(a)\land\vel_k(b)=\vel_k(a)]\ \land\\
 & \exists b\,[\ \inc_k(b,\vx)\ \land\m_k(b)
=\m_k(a)\land \vel_k(b)=\vel_k(a)]\Big). 
\end{align*}
\end{description}
By definition of possible collision, \ax{Ax\forall Coll} is   
equivalent with \ax{\forall Coll} below 
saying that, according to every inertial observer, every potential
 collision can be realized at any coordinate point. 
\begin{description}
\item[\ax{\forall Coll}]
For every positive integer $n$,
nonnegative integer $l$ with $l\leq n$, inertial observer $k$, coordinate point
$\vx\in\Q^d$, and (not necessarily different)
inertial particles $a_1,\ldots,a_n$ with 
\begin{equation*}
\sum_{0<i\leq l}\m_k(a_i)=\sum_{l<i\leq n}\m_k(a_i)\ \land\
\sum_{0<i\leq l}\m_k(a_i)\cdot\vel_k(a_i)=\sum_{l<i\leq n}
\m_k(a_i)\cdot\vel_k(a_i),
\end{equation*}
there are inertial particles $b_1,\ldots,b_n$ such that 
$\m_k(b_i)=\m_k(a_i)$ and
 $\vel_k(b_i)=\vel_k(a_i)$ for every $i$, 
$\inc_k(b_i,\vx)$ for every $0<i\leq l$,  $\out_k(b_i,\vx)$ for every
$l<i\leq n$, and therefore $\coll_k(b_1\ldots b_n)$.
\end{description}

Velocity $\vu\in\Q^{d-1}$ is said to be {\bf non-FTL}
if $u_1^2+\ldots+u_{d-1}^2\leq 1$.

\ax{AxThEx^+} and \ax{Ax\forall Coll} imply the following
statement: 
\begin{description}
\item[$\qquad $]
For every positive integer $n$, nonnegative integer $l$ with 
$l\leq n$, inertial observer $k$, coordinate point
$\vx\in\Q^d$, positive $m_1,\ldots,m_n\in\Q$, and non-FTL velocities  
$\vvvv_1,\ldots,\vvvv_n\in\Q^{d-1}$ with
\begin{equation*}
\sum_{0<i\leq l} m_i =\sum_{l<i\leq n} m_i\ \land\
\sum_{0<i\leq l} m_i\cdot \vvvv_i=\sum_{l<i\leq n}
m_i\cdot \vvvv_i
\end{equation*}
there are inertial particles $b_1,\ldots,b_n$ such that 
$\m_k(b_i)=m_i$ and $\vel_k(b_i)=\vvvv_i$ for every $i$, 
$\inc_k(b_i,\vx)$ for every $0<i\leq l$,  $\out_k(b_i,\vx)$ for every
$l<i\leq n$, and therefore $\coll_k(b_1\ldots b_n)$.
\end{description}

Let
us  introduce an axiom system \ax{SRDyn} for dynamics of special
relativity as the collection of all the axioms of kinematics and dynamics
above:
\begin{equation*}
\label{dinamika}
\ax{SRDyn}  \leteq   \ax{SR}\cup \ax{Coll}\cup\{\ax{Ax\forall
inecoll}, \ax{AxSpd}, \ax{AxMass}, \ax{AxThEx^+}, \ax{Ax\forall Coll}\}.
\end{equation*}

\section{Independence of massive FTL inertial particles of \ax{SRDyn}}
\label{sec-main}

Now we show that the existence of massive FTL inertial particles is
independent of \ax{SRDyn}. To formulate this statement we need some
definitions.

Let us recall that formula $\FTL(k,b)$ states that body $b$ moves FTL
according to inertial observer $k$, see (\ref{FTL}) on
p.\pageref{FTL}.

Let $\exists \FTL \Ip$ be the following formula saying that there is
an FTL inertial particle having positive relativistic mass:
\begin{equation*}
\exists\FTL \Ip\ \defiff\ \exists kb\ [\Ip(b)\land\FTL(k,b)\land \m_k(b)>0].
\end{equation*}

Let $\Sigma$ be a set of formulas and $\varphi$ be a formula.
$\Sigma\not\models\varphi$ denotes 
that
$\varphi$ is not  implied by $\Sigma$, i.e., there
is a model of $\Sigma$ in which $\varphi$ is not valid. 
Statement  $\varphi$ is called 
{\bf independent} of $\Sigma$ if neither $\varphi$ nor its negation
$\neg\varphi$ is  implied by $\Sigma$, i.e.,
$\Sigma\not\models\varphi$ and $\Sigma\not\models\neg\varphi$. 
Let us note
that $\varphi$ is independent of $\Sigma$ if there are two models of
$\Sigma$ such that $\varphi$ is valid in one model and $\neg\varphi$
is valid in the other one.

The main result of the present paper is Theorem~\ref{fotetel} below. It says
that the existence of massive FTL inertial particles is independent of relativistic
dynamics.

\begin{thm}
\label{fotetel}  
\ax{\exists\FTL\Ip} is
independent of $\ax{SRDyn}$, that is
\begin{eqnarray*}
\ax{SRDyn}& \not\models &  \ax{\exists\FTL\Ip}, \text{ and}\\
\ax{SRDyn}& \not\models & \neg\ax{\exists\FTL\Ip},
\end{eqnarray*}
equivalently, both \ax{\exists\FTL\Ip} and 
$\neg\ax{\exists\FTL\Ip}$ are consistent with \ax{SRDyn}. 
\end{thm}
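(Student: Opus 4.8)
The plan is to produce two models of \ax{SRDyn}, one satisfying $\ax{\exists\FTL\Ip}$ and one satisfying $\neg\ax{\exists\FTL\Ip}$. For $\neg\ax{\exists\FTL\Ip}$ I would take the standard relativistic-dynamics model over the field of reals: inertial observers are the Lorentz frames, photons occupy the null lines, and for every coordinate point $\vx$ and every future-timelike vector $P$ (so $0<P_1$ and $P_2^2+\dots+P_d^2<P_1^2$) there are an incoming and an outgoing inertial particle issuing from $\vx$ whose four-momentum according to the given frame is $P$ (its relativistic mass being $P_1>0$, its velocity $\langle P_2,\dots,P_d\rangle/P_1$, and its squared rest mass $P_1^2-(P_2^2+\dots+P_d^2)$ positive). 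In this model \ax{SR} holds as in \cite{FTLconsSR}; \ax{Coll} holds because conservation of four-momentum is Lorentz invariant; \ax{Ax\forall inecoll} holds because the sum of two future-timelike vectors is again future-timelike, hence realized by a particle of the model; \ax{AxSpeed} and \ax{AxMass} hold because four-momentum transforms covariantly and its squared Minkowski length $\m_k(b)^2\bigl(1-\vv_k(b)^2\bigr)$ is observer independent; \ax{AxThEx^+} holds by construction; and \ax{SPR^+} holds because the model is homogeneous, i.e.\ for every two inertial observers there is an automorphism fixing $\Q$ pointwise and taking one to the other, so every $\varphi\in\mathcal P$ has the same truth value at every observer. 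Since every particle here is slower than light, $\neg\ax{\exists\FTL\Ip}$ holds.

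The work lies in the second model $\mathfrak{M}$, obtained by the switching-reinterpretation construction of Section~\ref{sec-proofinf}. Fix a distinguished observer $e$. Its inertial particles are indexed by a base point $\vx\in\Q^d$, a vector $P\in\Q^d$ with $P_1\ge 0$, $P\neq 0$, a bit ``incoming/outgoing'', and a copy index; the $e$-worldline of such a particle is the ray $\{\vx+sP:s\le 0\}$ if it is incoming and $\{\vx+sP:s\ge 0\}$ if it is outgoing, and its four-momentum according to $e$ is $P$ (with the convention that when $P_1=0$ its $e$-speed is infinite and $\m_e$ is undefined for it). This list includes the FTL particles --- those with spacelike $P$ --- and the zero-time-component ones. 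Place a photon on every null line, and take as inertial observers a copy of a group $G$ of Poincar\'e transformations, say all of them, with $e$ corresponding to the identity and $\w_{eh}$ equal to the transformation labelling $h$; then the kinematical reduct of $\mathfrak{M}$ is a standard Poincar\'e model, so \ax{AxLight}, \ax{AxEField}, \ax{AxEv}, \ax{AxSelf} and \ax{AxSymD} hold (\ax{SPR^+} is treated below). Finally, writing $L_h$ for the linear part of $\w_{eh}$, I would \emph{define} the four-momentum of a particle $b$ with $e$-index $P$ according to an observer $h$ to be $L_h(P)$ if $\bigl(L_h(P)\bigr)_1>0$, to be $-L_h(P)$ if $\bigl(L_h(P)\bigr)_1<0$, and undefined otherwise; then $\m_h(b)$ is the time component of this vector (always positive where defined), $\vel_h(b)$ its normalized space part, and $\wl_h(b)=\w_{eh}[\wl_e(b)]$. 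Because a worldline direction is always parallel to the four-momentum, these data are mutually consistent; the ``switching-reinterpretation'' is that $b$ is incoming at $\w_{eh}(\vx)$ for $h$ exactly when the sign used was $+$ and outgoing exactly when it was $-$; relativistic masses stay positive; and, relative to two arbitrary observers, $\fm_h(b)=\pm L_{kh}(\fm_k(b))$ with the sign chosen to make the time component positive, $L_{kh}$ being the linear part of $\w_{kh}$.

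To check that $\mathfrak{M}$ models \ax{Dyn}: \ax{Coll} reduces to a one-line computation. If $b_1,\dots,b_n$ form a possible collision at $\vx$ for $k$ and each has finite speed for $h$, then passing to $h$ flips the four-momentum of $b_i$ exactly when it swaps the incoming/outgoing rôle of $b_i$, so these two sign changes cancel and the $h$-signed sum ($h$-incoming four-momenta minus $h$-outgoing ones) equals $L_{kh}$ applied to the $k$-signed sum, which is $L_{kh}(0)=0$; hence the collision is conserved for $h$ as well, the finite-speed hypothesis being exactly what makes each sign well defined. \ax{Ax\forall inecoll} holds because $\fm_k(a)+\fm_k(b)$ has positive time component (all relativistic masses in $\mathfrak{M}$ being positive), hence is the $k$-four-momentum of some outgoing particle $c$ at $\vx$, while $\fm_k(a)$ and $\fm_k(b)$ are realized by incoming particles $a',b'$ at $\vx$. \ax{AxSpeed} holds because the squared Minkowski length $\m_k(b)^2\bigl(1-\vv_k(b)^2\bigr)$ of $\fm_k(b)$ is invariant under the linear parts of worldview transformations and under sign change, so equal sub-unit speeds for $k$ and $h$ give $\m_k(b)^2=\m_h(b)^2$ and hence $\m_k(b)=\m_h(b)$, both being positive. \ax{AxMass} holds because equal relativistic mass and velocity for one observer force equal four-momentum, which is then transported by the same rule to an equal four-momentum for every observer. \ax{AxThEx^+} holds since $G$ is rich enough and, pulling the prescribed four-momentum back through $L_h^{-1}$ and flipping sign if needed, one obtains a particle of $\mathfrak{M}$ moving on the prescribed line with the prescribed mass. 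Finally, a particle with spacelike $e$-four-momentum moves FTL for $e$ and is an inertial particle, because for every observer $h$ according to whom its speed is finite its relativistic mass --- the positive time component of $\pm L_h(P)$ --- is nonzero; thus $\mathfrak{M}\models\ax{\exists\FTL\Ip}$.

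The one genuinely delicate axiom is \ax{SPR^+}, and I expect it to be the main obstacle. The plan there is to exhibit, for any two inertial observers $k,h$, an automorphism of $\mathfrak{M}$ that fixes $\Q$ pointwise and takes $k$ to $h$; since automorphisms preserve all first-order formulas this yields $\varphi(k,\vx)\leftrightarrow\varphi(h,\vx)$ for every $\varphi\in\mathcal P$. These automorphisms arise from the ``re-rooting'' symmetry of the construction: for $R\in G$, relabel the observer carrying the Poincar\'e transformation $R_m$ as the one carrying $R_mR^{-1}$, relabel each photon and particle by transporting its $e$-index by $R$ (replace $P$ by $L_R(P)$, and flip the sign together with the incoming/outgoing bit when $\bigl(L_R(P)\bigr)_1<0$), and fix $\Q$ pointwise; verifying that this preserves $\W$ and $\M$ uses that the $e$-four-momenta range over \emph{all} of $\{P:P_1\ge 0,\ P\neq 0\}$, so that this set is carried onto the analogous set in every frame (this is why the zero-time-component particles have to be included), and that enough copies of each particle are present. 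This symmetry bookkeeping, together with pinning down all the coordinatization conventions, is the bulk of the argument; by contrast the conceptually central axiom \ax{Coll} is essentially free once the switching rule is in place. The complete proof is given on p.\pageref{proof}.
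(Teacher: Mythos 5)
Your proposal is correct and takes essentially the same route as the paper: in both, the observers are the Poincar\'e transformations, particle worldlines are rays, four-momenta according to a moving observer are obtained from the linear part of the worldview transformation with the sign flipped to keep the time component positive (the switching-reinterpretation rule), the verifications of \ax{Coll}, \ax{Ax\forall inecoll}, \ax{AxSpeed}, \ax{AxMass} and \ax{AxThEx^+} proceed by the same four-momentum bookkeeping, and \ax{SPR^+} is handled by exhibiting automorphisms that fix $\Q$ pointwise and permute observers by composition. The only cosmetic differences are your particle indexing (base point, momentum, in/out bit, copy index) versus the paper's ordered pairs $\vx\vy$, and that the paper obtains the no-FTL model by restricting the very same construction to non-spacelike rays instead of invoking a separately described standard model.
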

\begin{proof}
The theorem is a corollary of Thm.\ref{masodikfotetel} below.
\end{proof}

By Theorem~\ref{masodikfotetel} below, the existence of
massive FTL inertial particles is independent of relativistic dynamics even
  if we assume that the structure of quantities is isomorphic to the
  field of real numbers (or to any other fixed Euclidean field).

\begin{thm}
\label{masodikfotetel}
For every $d\geq 2$ and 
for every Euclidean  field $\mathfrak{Q}$, there are models
$\mathfrak{M}_1$ and $\mathfrak{M}_2$ of
$\ax{SRDyn}$ such that $\mathfrak{M}_1\models\ax{\exists \FTL \Ip}$, 
$\mathfrak{M}_2\models\neg\ax{\exists \FTL \Ip}$,  and 
$\mathfrak{Q}$ is the field reduct of both  models.
\end{thm}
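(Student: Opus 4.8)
The plan is to construct the two models directly over the prescribed Euclidean field $\mathfrak{Q}$, arranging their worldview transformations to be Poincar\'e ones \emph{by fiat}, so that Theorem~\ref{thm-poi} is not needed and the restriction $d\ge 3$ is avoided. For $\mathfrak{M}_2$ I would take the familiar model of special relativistic dynamics: the bodies form a copy of the Poincar\'e group of $\Q^d$ (the inertial observers, with one distinguished observer $E$ playing the role of the identity), together with a photon on every lightlike affine line of $\Q^d$ and a particle for every pair $\langle\ell,m_0\rangle$ where $\ell$ is a future-pointing timelike affine line and $0<m_0$ is its rest mass; $\W$ and $\M$ are defined so that each $\w_{kh}$ is a Poincar\'e transformation and so that $E$ --- hence, by this Poincar\'e structure, every observer --- assigns to a particle of rest mass $m_0$ and speed $\bv<1$ the relativistic mass $m_0/\sqrt{1-\bv^2}$, which is the one place where the Euclidean-field property of $\mathfrak{Q}$ is used. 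Checking $\ax{SR}$ for $\mathfrak{M}_2$ is the standard computation; $\ax{Coll}$ holds because conservation of four-momentum is preserved by the linear parts of Poincar\'e transformations and, the four-momenta here being always future-timelike, no sign ambiguity arises; and $\ax{Ax\forall \inecoll}$, $\ax{AxThEx^+}$, $\ax{AxSpeed}$, $\ax{AxMass}$ are immediate (the sum of two future-timelike four-momenta is future-timelike, so the inelastic collision product is again a particle of the model). Every particle of $\mathfrak{M}_2$ lies on a timelike line, so $\mathfrak{M}_2\models\neg\ax{\exists\FTL\Ip}$.

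The nontrivial model is $\mathfrak{M}_1$, for which I would carry out the switching-reinterpretation construction sketched in Section~\ref{sec-proofinf}. Fix a distinguished observer $E$ and put into $E$'s worldview, for every coordinate point $\vo\in\Q^d$ and every \emph{nonzero} $P\in\Q^d$, an ``incoming'' and an ``outgoing'' particle through $\vo$ in the direction of $P$ whose four-momentum according to $E$ is $P$; thus $E$ sees this particle with relativistic mass $|P_1|$ and velocity $\langle P_2,\ldots,P_d\rangle/P_1$, the vectors $P$ with $P_1=0$ giving the ``infinite-speed'' (horizontal worldline) particles mentioned in Section~\ref{sec-proofinf}, and photons sit on every lightlike line. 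Since $P$ ranges over spacelike vectors too, $E$ sees particles of every speed, so $\mathfrak{M}_1\models\ax{\exists\FTL\Ip}$. For every Poincar\'e transformation $T$ I then adjoin a moving observer $h_T$ (with $h_{\mathsf{id}}=E$) by declaring $\w_{Eh_T}=T$, and I define the four-momentum of a particle $p$ according to $h_T$ to be $L_T(P)$ when its time component $L_T(P)_1$ is positive and $-L_T(P)$ when $L_T(P)_1$ is negative, where $P$ is the four-momentum of $p$ according to $E$ and $L_T$ is the linear part of $T$; when $L_T(P)_1=0$, observer $h_T$ sees $p$ with infinite speed, so $\Ip$ imposes no mass requirement from $h_T$ and $p$ enters no possible collision for $h_T$. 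This $\pm$-switch keeps all relativistic masses strictly positive and, crucially, makes conservation of four-momentum observer-independent even when a fusion for $E$ is a decay for $h_T$, cf.\ \eqref{bev-e} and Fig.\ref{bev}: a short case analysis over the sign of $L_T(P)_1$ for each participant shows that the signed sum of the $h_T$-four-momenta of the incoming minus the outgoing particles equals $L_T$ applied to the analogous signed sum for $E$, hence vanishes; this gives $\ax{AxColl_n}$ for every $n$, that is $\ax{Coll}$.

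It remains to verify the other axioms of $\ax{SRDyn}$ for $\mathfrak{M}_1$, and the step I expect to be the genuine obstacle is the principle of relativity $\ax{SPR^+}$. For this I would exhibit, for every Poincar\'e transformation $R$, an automorphism $\Phi_R$ of $\mathfrak{M}_1$ fixing $\mathfrak{Q}$ pointwise and sending $h_T$ to $h_{R\circ T}$, with the induced action on photons and particles obtained by transporting worldlines along $R$ and four-momenta along $L_R$ together with the same $\pm$-switch; since the $\Phi_R$ act transitively on the set of inertial observers and fix all quantities, every $\varphi\in\mathcal{P}$ is true of one inertial observer exactly when it is true of all of them, which is $\ax{SPR^+}$. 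The rest is routine: $\ax{AxLight}$, $\ax{AxEv}$, $\ax{AxSelf}$, $\ax{AxSymD}$ hold because every $\w_{kh}$ is Poincar\'e by construction and photons lie on every lightlike line; $\ax{AxThEx^+}$ holds because $E$, hence by the symmetry just established every observer, already carries observers and positive-mass particles on every slower-than-light line; $\ax{AxSpeed}$ and $\ax{AxMass}$ hold because a slower-than-light particle has future-timelike four-momentum according to \emph{every} observer, so no switch ever occurs for it and its mass transforms exactly as in $\mathfrak{M}_2$; and $\ax{Ax\forall \inecoll}$ holds because, given two finite-speed particles whose relativistic masses (both necessarily positive here) sum to give a four-momentum $C$ with positive time component, the required inelastic collision is realized by copies of these two particles through the prescribed coordinate point together with the outgoing particle of four-momentum $C$. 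As $\mathfrak{Q}$ is the field reduct of both $\mathfrak{M}_1$ and $\mathfrak{M}_2$ by construction, this proves Theorem~\ref{masodikfotetel}; Theorem~\ref{fotetel} follows by taking $\mathfrak{Q}$ to be, say, the ordered field of real numbers.
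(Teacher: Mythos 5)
Your $\mathfrak{M}_1$ is essentially the paper's construction (the paper codes your particles as ordered pairs $\vx\vy$, i.e.\ rays whose four-momentum according to $k$ is $\pm\big(k(\vx)-k(\vy)\big)$ with positive time component, and it proves \ax{Coll} by exactly your signed-sum/linearity argument), but two of your steps fail as written. First, $\mathfrak{M}_2$: if the worldline of a particle is a whole timelike affine line, that particle is never incoming or outgoing at any coordinate point, because $\inc_k(b,\vx)$ and $\out_k(b,\vx)$ require every other point of the worldline to lie strictly earlier (respectively later) in time than $\vx$, i.e.\ $\vx$ must be a temporal endpoint of the worldline. Hence $\vx\text{-}\inecoll_k(a'b')$ is never satisfiable in your $\mathfrak{M}_2$, while the hypothesis of \ax{Ax\forall\inecoll} is satisfiable there, so \ax{Ax\forall\inecoll} fails; you need half-lines. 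The paper sidesteps this by building $\mathfrak{M}_2$ with the very same ray-based construction as $\mathfrak{M}_1$, restricted to the non-FTL rays, which also spares a second round of axiom checking.

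Second, the \ax{SPR^+} step, which you yourself single out as the crux, is carried out with maps that are not automorphisms. With your convention $\w_{Eh_T}=T$ we have $\wl_{h_T}(b)=T[\wl_E(b)]$ for every body $b$, so your $\Phi_R$ gives $\wl_{\Phi_R(h_T)}\big(\Phi_R(p)\big)=(R\circ T)\big[R[\wl_E(p)]\big]$, which equals $\wl_{h_T}(p)=T[\wl_E(p)]$ only if $R\circ T\circ R=T$. In fact no quantity-fixing automorphism can induce $h_T\mapsto h_{R\circ T}$ for a generic boost $R$: since \ax{AxSelf} and $\w_{Eh_S}=S$ force $\wl_E(h_S)=S^{-1}[\taxis]$, we get $\wl_{h_T}(h_S)=T\circ S^{-1}[\taxis]$, and preserving these worldlines would make the image of $h_S$ depend on $T$. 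The orientation of the composition matters: what works is right composition paired with the inverse transport, $h_T\mapsto h_{T\circ R}$ while moving every particle's point and four-momentum data by $R^{-1}$ (this is the paper's automorphism $\alpha(h)=h\circ k^{-1}$, $\alpha(\vx\vy)=k(\vx)k(\vy)$ in your indexing); that family fixes $\Q$ pointwise and still acts transitively on observers, so your derivation of \ax{SPR^+} then goes through. A smaller point: since you admit all Poincar\'e transformations as observers, non-orthochronous ones do switch the four-momentum even of slower-than-light particles, so your justification of \ax{AxSpeed} and \ax{AxMass} (``no switch ever occurs'') should be replaced by the interval computation $(1-q^2)\,\timed\big(k(\vx),k(\vy)\big)^2=(1-q^2)\,\timed\big(h(\vx),h(\vy)\big)^2$, which is how the paper argues.
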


Based on the intuitive idea in 
Section~\ref{sec-proofinf}, we give a formal proof here using the
following concepts.

Let $f:\Q^d\rightarrow\Q^d$ and $g:\Q^d\rightarrow\Q^d$ be maps.
$f\circ g$ denotes the composition of the two maps, i.e.,
$(f\circ g)(\vx)=f\big(g(\vx)\big)$. 
 $f^{-1}$ denotes the inverse map of $f$.
Let $H$ be a subset of $\Q^d$. The {\bf $f$-image} of set $H$ is defined as:
$f[H]\de\{\,f(\vx)\::\:\vx\in H\,\}$.
The {\bf identity map} is defined as:
$\Id(\vx)\de\vx$ for all $\vx\in\Q^d$.
Let $\vx,\vy\in\Q^d$. Then
$\ray{\vx\vy}$ denotes the closed ray (or half-line) with 
initial point $\vx$ and containing $\vy$, i.e.,
$\ray{\vx\vy} \leteq\{ \vx+q\cdot(\vy-\vx)\, :\,
0\leq q\}$.

The {\bf time-axis} is defined as 
\begin{equation*}
\taxis\de\{\,\vx\in\Q^d\::\: x_2=\ldots=x_d=0\,\}. 
\end{equation*}

Let $k\in\IOb$ and $b\in\B$.  The {\bf four-momentum} ${\fvp_k(b)}$ of
body $b$ according to inertial observer $k$ is defined as the element
of $\Q^d$ whose time component is the relativistic mass and
space component is the 
linear momentum of $b$ according to $k$ if $b$ is an
inertial particle and the speed of $b$ is finite (see Fig.\ref{fmom}),
i.e.:
 \begin{equation}
\label{fourm-d}
 \fvp_k(b)_1=\m_k(b)\  \text{and}\ 
\langle \fvp_k(b)_2,\ldots,\fvp_k(b)_d\rangle =\m_k(b)\cdot \vel_k(b),
 \end{equation}
 if $\Ip(b)$ and $\sqspeed_k(b)<\infty$,
and $\fvp_k(b)$ is undefined otherwise.
 It is not difficult to prove that $\fvp_k(b)$ is parallel to the
 world-line of $b$.

\begin{figure}[h!bt]
\begin{center}
\small
\psfrag*{P}[l][l]{$\fvp_{{k}}({b})$ ({\bf four-momentum})}
\psfrag*{b}[l][l]{$\wl_k({b})$}
\psfrag*{m}[r][r]{$\m_k(b)$ (mass)}
\psfrag*{l}[l][l]{$\m_k(b)\cdot\vel_k(b)$ (linear momentum)}
\psfrag*{i}[r][r]{$k$}
\includegraphics[keepaspectratio, width=0.5\textwidth]{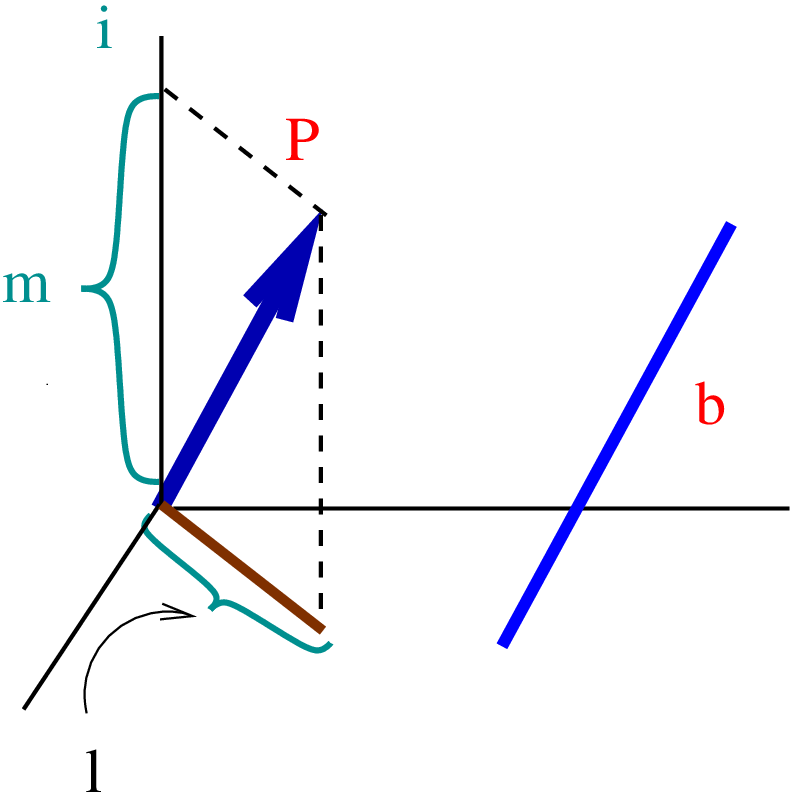}
\caption{Illustration for four-momentum}
\label{fmom}
\end{center}
\end{figure}

Using the concept of four-momentum 
definition (\ref{pcoll-e})  of possible collision of bodies 
can be written in a simpler form: 
\begin{multline}
\label{pcollfm-d}
\poscoll_k(b_1\ldots b_n)\ \Longleftrightarrow\ 
\exists\vx \bigwedge_{i=1}^{n} [\inc_k(b_i,\vx)\vee\out_k(b_i,\vx)]\ \land\\
\sum_{\{i\, :\, \inc_k(b_i,\vx)\}}
\fvp_k(b_i)=\sum_{\{i\, :\,\out_k(b_i,\vx)\}} \fvp_k(b_i). 
\end{multline}

\begin{proof}
\label{proof}
The idea of the proof is in Section~\ref{sec-proofinf} on 
 p.\pageref{sec-proofinf}.

Let $\mathfrak{Q}=\langle \Q;+,\cdot,<\rangle$ be a Euclidean 
field.
We are going to prove our statement by constructing two models
$\mathfrak{M}_1$ and $\mathfrak{M}_2$ of \ax{SRDyn} such that 
in
$\mathfrak{M}_1$ there are massive FTL inertial particles, in 
$\mathfrak{M}_2$ there are no FTL inertial particles and the ordered field
reduct of both models is $\mathfrak{Q}$.  
There is  only a slight difference in the two
constructions. Therefore, we are going to construct the two models
simultaneously. By $\vx\vy$ we denote the ordered pair
$\langle\vx,\vy\rangle$.

\begin{eqnarray}
\label{eq-IOb}
\IOb&  \de & \{\,\text{Poincar\'e transformations of }\Q^d\,\},\\
\label{imp1-d}
\Ip_1  & \de & 
\{\, \vx\vy\in\Q^d\times\Q^d \, :\, \vx\neq\vy\}, \\
\label{imp2-d}
\Ip_2 &\de &\{\, \vx\vy\in\Ip_1 \, :\, \sqspace(\vx,\vy)
\leq\timed(\vx,\vy)\,\}, \text{and}\\
\label{Ls-d}
\Ph &\de & \{\,\vx\vy\in\Ip_1 \, :\, \sqspace(\vx,\vy)=\timed(\vx,\vy)\,\}.
\end{eqnarray}

Let us note that $\Ph\subseteq\Ip_2\subseteq\Ip_1$.
The only difference between the construction of models
$\mathfrak{M}_1$ and $\mathfrak{M}_2$ is in the definition of the set
of bodies:
\begin{equation}
\B_1 \leteq \IOb\cup\Ip_1 \text{ and } \B_2 \leteq \IOb\cup\Ip_2.
\end{equation}
Throughout the proof
$\B$ and $\Ip$ denote $\B_1$ and $\Ip_1$ in the case of
${\mathfrak M}_1$ and denote $\B_2$ and $\Ip_2$ in the
case of ${\mathfrak M}_2$. Furthermore, by ``inertial particles''
we mean the members of $\Ip_1$ in the case of ${\mathfrak M}_1$ and
the members $\Ip_2$ in the case of ${\mathfrak M}_2$.

The model construction is illustrated in Fig.\ref{dyn3} and its
intuitive idea is the following: The worldview transformation between
inertial observers identity $\Id$ and $k$ will be Poincar\'e
transformation $k$. First we define the world-lines of bodies
according to observer $\Id$. In particular, the world-line of particle
$\vx\vy$ is $\ray{\vx\vy}$ for every $\vx\vy$. We transform the
world-lines by transformation $k$ to obtain world-lines according to
arbitrary observer $k$, cf.\ Fig.\ref{dyn3}.
\begin{figure}[h!bt]
\small
\psfrag*{wl}[lb][lb]{$\wl(\vx\vy)$}
\psfrag*{wl1}[lb][lb]{$\wl_k(\vx\vy)$}
\psfrag*{text1}[l][l]{worldview of observer $k$}
\psfrag*{text2}[l][l]{worldview of observer $\Id$}
\psfrag*{t}[lb][lb]{$\taxis$}
\psfrag*{h-}[b][b]{$h^{-1}$}
\psfrag*{wh}[l][l]{$\wl(h)$}
\psfrag*{x}[l][l]{$\vx$}
\psfrag*{y}[l][l]{$\vy$}
\psfrag*{m}[l][l]{$\m(\vx\vy)$}
\psfrag*{k}[b][b]{$k$}
\psfrag*{wkh}[b][b]{$\wl_k(h)$}
\psfrag*{kx}[r][r]{$k(\vx)$}
\psfrag*{ky}[r][r]{$k(\vy)$}
\psfrag*{mk}[b][b]{$\m_k(\vx\vy)$}
\includegraphics[keepaspectratio,width=\textwidth]{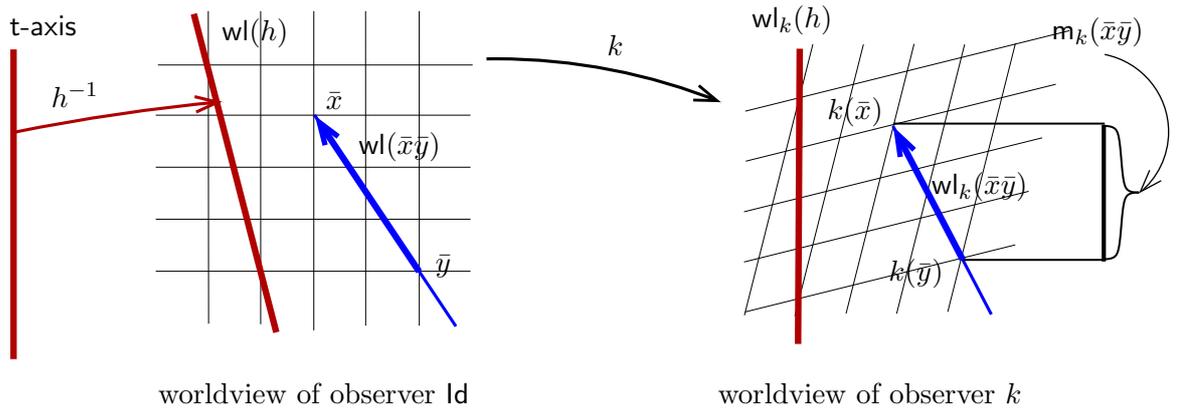}
\caption{Illustration for the construction of the models}
\label{dyn3}
\end{figure}
Thus the world-line of particle
$\vx\vy$ is $\ray{k(\vx)k(\vy)}$ according to observer $k$.

We define relativistic masses such that (i)-(iii) holds. (i) 
The time components of the four-momenta are positive, 
thus the relativistic masses of inertial particles of finite speeds 
are positive.  (ii) The four-momentum $\fvp(\vx\vy)$ of 
particle $\vx\vy$,  according to observer $\Id$, 
is one of the two vectors connecting $\vx$ and $\vy$,
i.e., one of $\vx-\vy$ and $\vy-\vx$, if the speed of 
$\vx\vy$ is finite. (iii)  According to observer $k$, the four-momentum of 
particle $\vx\vy$ with finite speed is one of the two vectors connecting
$k(\vx)$ and $k(\vy)$.

Now we construct our models based on the intuitive idea above.  
See Fig.\ref{dyn3}.
In the worldview of observer $\Id$, we define
the world-line of inertial observer $h$ and the world-line of inertial 
particle $\vx\vy$ as:
\begin{equation}
\label{wl-i}
\wl(h) \leteq  h^{-1}[\taxis],\ \text{ and }
\wl(\vx\vy)  \leteq  {\ray{\vx\vy}},
\end{equation}
see Fig.\ref{dyn3}. We define the world-line of body
$b$ and the relativistic mass of inertial particle $\vx\vy$ in the
worldview of observer $k$ as:

\begin{equation}    
\wl_k(b) \leteq k[\wl(b)]\ \text{ and }\ \label{wl-d}
\m_k(\vx\vy)\leteq  \timed\big(k(\vx),k(\vy)\big),
\end{equation}
cf.\ Fig.\ref{dyn3}.

Finally  we define the worldview relation $\W$ and the mass relation
$\M$ as:
\begin{eqnarray}
\label{W-d}
\W(k,b,\vx) & \defiff & k\in\IOb\land b\in \B \land\vx\in\wl_k(b), \text{ and}\\
\label{M-d}  
\M(k,b,q) & \defiff & k\in\IOb\land b\in\Ip\land \m_k(b)=q.
\end{eqnarray}

Now models ${\mathfrak M}_1$ and ${\mathfrak M}_2$ are given.
It is easy to see that the set of 
inertial particles in ${\mathfrak M}_1$ and ${\mathfrak M}_2$ are
$\Ip_1$ and $\Ip_2$, respectively.

For every inertial observers $k$ and $h$
and inertial particle
$\vx\vy$,  by (\ref{wl-d}), it is easy to see that
\begin{eqnarray}
\label{wlray}
\wl_k(\vx\vy)& = & \ray{k(\vx)k(\vy)}, \\
\label{harmadikcs}
\fvp_k(\vx\vy)   &  = & \left\{\begin{array}{ll} 
k(\vx)-k(\vy)    & \text{ if  } k(\vy)_1<k(\vx)_1, \\
k(\vy)-k(\vx)    & \text{ if } k(\vx)_1< k(\vy)_1,\\
\text{undefined} &\text{  if  }  k(\vx)_1=k(\vy)_1, 
\end{array}
\right.
\\
\label{eeeee}
\sqspeed_k(\vx\vy)   &  = & \left\{\begin{array}{ll} 
\frac{\sqspace(k(\vx),k(\vy))}{\timed{(k(\vx),k(\vy))}}    & \text{ if  }
k(\vy)_1\neq k(\vx)_1, \\
\text{undefined}    & \text{ if } k(\vx)_1=k(\vy)_1, 
\end{array}
\right.
\\ \w_{hk} &  =  &  k\circ h^{-1} \text{ (is a Poincar\'e transformation), and}
\label{roe}
\\ \label{taxis} \wl_k(h) &  = & k\circ h^{-1}[\taxis].
\label{ue}
\end{eqnarray}

By (\ref{pcollfm-d})--(\ref{ue}),  
it is not difficult to prove that
$\mathfrak{M}_1$ and $\mathfrak{M}_2$ are models of
$\ax{SRDyn}\setminus\ax{SPR^+}$, there are FTL
inertial particles in $\mathfrak{M}_1$
and there are no FTL inertial particles in $\mathfrak{M}_2$.
For details of the proof see below. 

By \cite[Prop.2]{FTLconsSR}, to prove that axiom schema \ax{SPR^+} is 
valid in models $\mathfrak{M}_1$ and $\mathfrak{M}_2$, it is enough to show
that, for every inertial observer  $k$, there is an automorphism
fixing the quantities and taking observer $k$ to observer $\Id$. For fixed
observer $k$,  let $\alpha$ be the following map:
\begin{equation}
\label{alpha}  
\alpha(h)= h\circ k^{-1},\ \ \alpha(q)   =   q\ \text{ and }\ 
\alpha(\vx\vy)  =  k(\vx)k(\vy)
\end{equation} 
for every inertial observer $h$, quantity $q$, and
inertial particle $\vx\vy$. Clearly $\alpha$ takes observer $k$ to
observer $\Id$. It is not difficult to prove that $\alpha$ is an
automorphism of our model, for details of the proof see below. 
Thus \ax{SPR^+} is valid in the models. 

\medskip
\hosszu{{\bf Details of the proof:}
Now we are going to show in detail that $\mathfrak{M}_1$  and $\mathfrak{M}_2$
are models of axiom system \ax{SRDyn} and there are massive FTL inertial particles 
in $\mathfrak{M}_1$ and there are no FTL inertial particles in 
$\mathfrak{M}_2$. We are going to prove this simultaneously for 
the two models.}

\hosszu{The field reduct of both models is the Euclidean  field
${\mathfrak Q}$. Thus \ax{AxEField} is valid in the models.}

\hosszu{By (\ref{roe}), world-view transformations are Poincar\'e
transformations, hence they are affine transformations and bijections. 
Thus \ax{AxEv} is valid in models $\mathfrak{M}_1$ and $\mathfrak{M}_2$.}

\hosszu{By (\ref{taxis}), we have that $\wl_k(k)=\taxis$. Thus,
\ax{AxSelf} is valid in the two models, by (\ref{W-d}).}

\hosszu{We say that $\ray{\vx\vy}$ is {\bf light-like} if{}f $\vx\neq\vy$ 
and $\sqspace(\vx,\vy)=\timed(\vx,\vy)$ and it is
{\bf time-like} if{}f  $\sqspace(\vx,\vy)<\timed(\vx,\vy)$.}

\hosszu{By (\ref{wlray}) and by the properties of Poincar\'e transformations,
the world-lines of bodies in $\Ph$ are the light-like rays according to
any observer. Thus the ``speed of light'' is $1$ for any observer.
Thus \ax{AxLight} and the second part of \ax{AxSymD} holds.}

\hosszu{Any Poincar\'e transformation $P$ preserves the spatial distance of
points $\vx,\vy\in\Q^d$ for which $x_1=y_1$ and
$P(\vx)_1=P(\vy)_1$. Therefore, inertial observers agree as to the
spatial distance between two events if these two events are
simultaneous for both of them. We have already shown that the speed of
light is 1 according to each inertial observer in models  $\mathfrak{M}_1$ and
$\mathfrak{M}_2$. Consequently, axiom 
\ax{AxSymD} is also valid in these models.}

\hosszu{To prove that axiom schema \ax{SPR^+} is valid in
models $\mathfrak{M}_1$ and $\mathfrak{M}_2$, it is enough to show
that, for observer
 $k$,  map $\alpha$ given in (\ref{alpha}) is an automorphism. 
It is clear that $\alpha$ leaves the elements of 
\Q\ fixed and it is a permutation on  sets \B, \IOb, $\Ph$  and 
\Ip\  since \IOb\ is the set of Poincar\'e
transformations.  Thus $\B(b)\Leftrightarrow \B\big(\alpha(b)\big)$, 
$\IOb(h)\Leftrightarrow  \IOb\big(\alpha(h)\big)$ 
and $\Ph(p)\Leftrightarrow\Ph(\alpha(p))$.
To prove that $\alpha$ is
an automorphism it remains to prove that  
$\W(k,b,\vx)\Leftrightarrow\W\big(\alpha(k),\alpha(b),\vx)$
and $\M(k,b,q)\Leftrightarrow\M(\alpha(k),\alpha(b),q)$.  
By (\ref{W-d}) and (\ref{M-d}), it is sufficient to prove that
for every inertial observers $h$ and $o$, 
and inertial particle $\vx\vy$,}
\begin{equation}
\label{elso}
\hosszu{\wl_h(o)=\wl_{\alpha(h)}\big(\alpha(o)\big),\ \
\ \wl_h(\vx\vy)=\wl_{\alpha(h)}\big(\alpha(\vx\vy)\big),\
\text{ and }\ } 
\hosszu{\m_h(\vx\vy)=\m_{\alpha(h)}\big(\alpha(\vx\vy)\big).}
\end{equation}
\hosszu{By (\ref{wl-i}), (\ref{wl-d}) 
and (\ref{alpha}), we have}
\begin{multline*}
\hosszu{\wl_{\alpha(h)}\big(\alpha(o)\big)=\alpha(h)
\big[\wl\big(\alpha(o)\big)
\big]= \alpha(h)\big[\alpha(o)^{-1}[\taxis]\big]=}\\
\hosszu{h\circ k^{-1}\big[k\circ o^{-1}[\taxis]\big]=
h\big[ o^{-1}[\taxis]\big]=h[\wl(o)]=\wl_h(o)}, 
\end{multline*}

\begin{multline*}
\hosszu{\wl_{\alpha(h)}\big(\alpha(\vx\vy)\big)=\alpha(h)\big[\wl
\big(\alpha(\vx\vy)
\big)\big]=h\circ k^{-1}\big[\wl\big(k(x)k(y)\big)\big]=}\\
\hosszu{h\circ k^{-1}\big[\ray{k(\vx)k(\vy)}\,\big]=h\circ k^{-1}\circ k
\big[\ray{\vx\vy}\,\big]=h\big[\ray{\vx\vy}\,\big] = \wl_h(\vx\vy),\text{
and}}
\end{multline*}

\begin{multline*}
\hosszu{\m_{\alpha(h)}\big(\alpha(\vx\vy)\big)=\m_{h\circ
k^{-1}}\big(k(\vx)k(\vy)\big)=}\\ 
\hosszu{\timed\big(h\circ k^{-1}\circ k (\vx), h\circ k^{-1}\circ k(\vy)\big)=
\timed\big(h(\vx),h(\vy)\big)=\m_h(\vx\vy).}  
\end{multline*}
\hosszu{Thus (\ref{elso}) above holds.} 

\hosszu{Therefore, axiom schema \ax{SPR^+} is valid in the
models. We have proved that ${\mathfrak M}_1$ and ${\mathfrak M}_2$
are models of the axiom system  \ax{SR}  of kinematics of
special relativity.}

\hosszu{By (\ref{wlray}) and by the properties of Poincar\'e transformations, 
the world-lines of bodies in $\Ip_1$ are the rays, 
the world-lines of bodies in $\Ip_2$ are the time-like and the light-like 
rays according to any observer.
Thus there are massive FTL inertial particles in ${\mathfrak M}_1$
and there are no FTL inertial particles in ${\mathfrak M}_2$.  
Thus we have proved that ${\mathfrak M}_1\models\exists\FTL \Ip$ and
${\mathfrak M}_2\models\neg\exists\FTL \Ip$.}

\hosszu{It remains to prove that axioms of dynamics 
are also valid in the models.}

\hosszu{First we turn proving that, for every natural number $n$, \ax{AxColl_n} 
is valid in the model. The  proof is illustrated in Fig.\ref{dyn7}.
Throughout $\vo\leteq\langle 0,\ldots,0\rangle\in\Q^d$ denotes the origin of the
coordinate system.}
\begin{figure}
\small
\psfrag*{Wi}[l][l]{worldview of observer $k$}
\psfrag*{Wk}[l][l]{worldview of observer $h$}
\psfrag*{x}[lb][lb]{$k(\vx)$}
\psfrag*{y}[lb][lb]{$k(\vy^1)$}
\psfrag*{y2}[rt][rt]{$k(\vy^2)$}
\psfrag*{y3}[rb][rb]{$k(\vy^3)$}
\psfrag*{kx}[lt][lt]{$h(\vx)$}
\psfrag*{ky}[lb][lb]{$h(\vy^1)$}
\psfrag*{ky2}[rb][rb]{$h(\vy^2)$}
\psfrag*{ky3}[rb][rb]{$h(\vy^3)$}

\psfrag*{text1}[t][t]{\shortstack[c]
{$\poscoll_k(\vx\vy^1\,\vx\vy^2\,\vx\vy^3)$\\
$\Updownarrow$\\
$\fvp_k(\vx\vy^2)+\fvp_k(\vx\vy^3)=\fvp_k(\vx\vy^1)$}
}
\psfrag*{text2}[t][t]{\shortstack[c]
{$\poscoll_h(\vx\vy^1\,\vx\vy^2\,\vx\vy^3)$\\
$\Updownarrow$\\
$\fvp_h(\vx\vy^3)=\fvp_h(\vx\vy^1)+\fvp_h(\vx\vy^2)$}
}

\psfrag*{text3}[b][b]{\shortstack[c]{$-P_k(\vx\vy^1)+
                                       P_k(\vx\vy^2)+
                                       P_k(\vx\vy^3)=\vo$\\
                                       $\Updownarrow$\\
                                       $\sum\big(k(\vx)-k(\vy^i)\big)=\vo$}
}

\psfrag*{text4}[b][b]{\shortstack[c]{$-P_h(\vx\vy^1)-
                                       P_h(\vx\vy^2)+
                                       P_h(\vx\vy^3)=\vo$\\
                                       $\Updownarrow$\\
                                       $\sum\big(h(\vx)-h(\vy^i)\big)=\vo$}
}

\psfrag*{P}[lb][lb]{$P_k(\vx\vy^1)$}
\psfrag*{P2}[rt][rr]{$P_k(\vx\vy^2)$}
\psfrag*{P3}[rb][rb]{$P_k(\vx\vy^3)$}

\psfrag*{PP}[lb][lb]{$P_h(\vx\vy^1)$}
\psfrag*{PP2}[lt][lr]{$P_h(\vx\vy^2)$}
\psfrag*{PP3}[rb][rb]{$P_h(\vx\vy^3)$}
\includegraphics[keepaspectratio,width=0.8\textwidth]{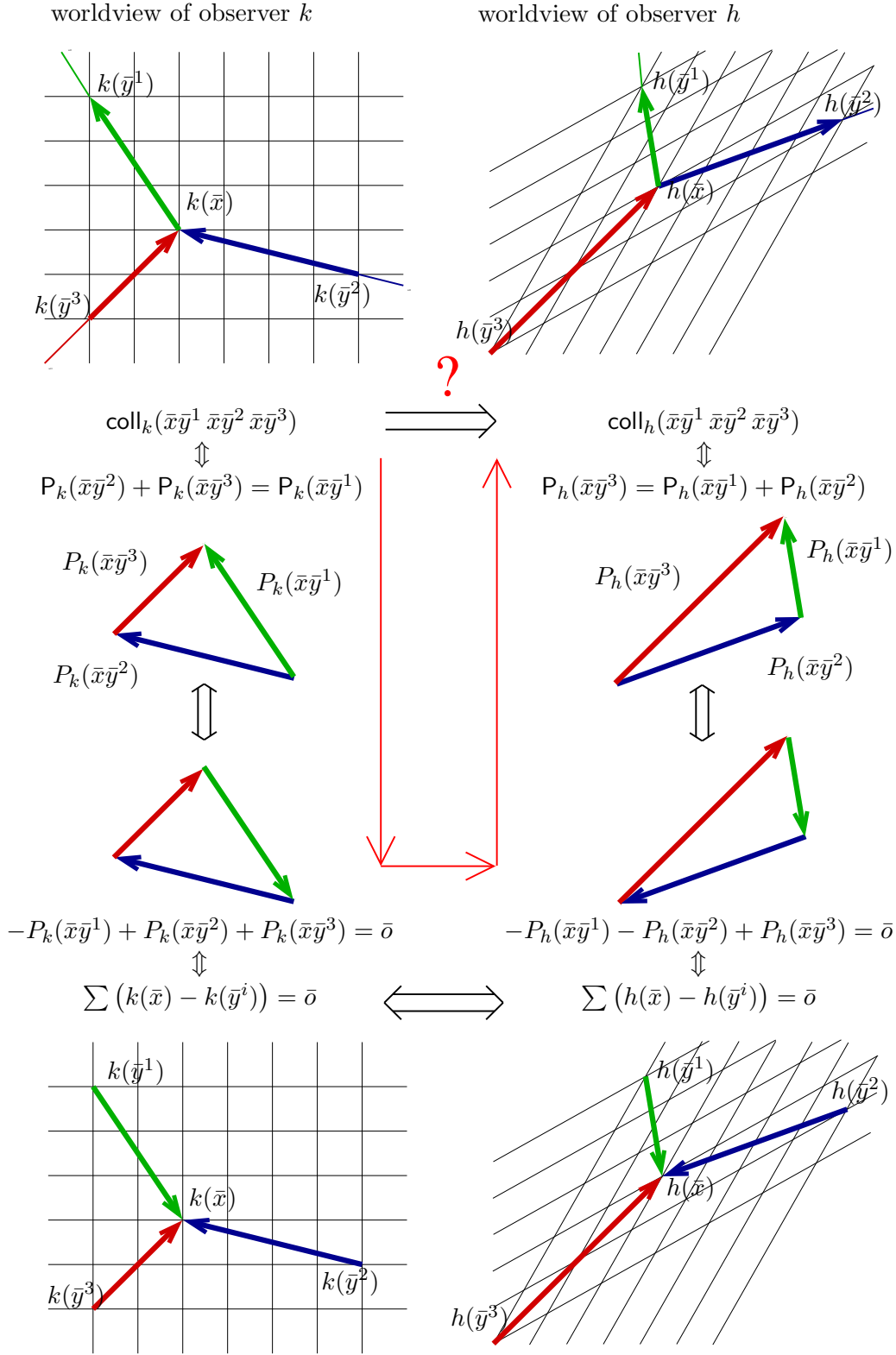}
\caption{\hosszu{Illustration for proving that \ax{AxColl_n} 
is valid in the models}}
\label{dyn7}
\end{figure}
\hosszu{Recall that, by (\ref{wlray}), $\wl_k(\vx\vy)= \ray{k(\vx)k(\vy)}$ 
for every inertial particle $\vx\vy$ and inertial observer $k$. Therefore}
\begin{equation}
\hosszu{\inc_k(\vx\vy,k(\vx))\ \Longleftrightarrow\
 k(y)_1< k(x)_1\quad \text{ and }}\quad
\hosszu{\out_k(\vx\vy,k(\vx))\  \Longleftrightarrow\  k(x)_1< k(y)_1.}
\label{inout-e}
\end{equation}

\hosszu{Recall that, by (\ref{harmadikcs}), $\fvp_k(\vx\vy)=k(\vx)-k(\vy)$ 
if{}f
$k(\vy)_1<k(\vx)_1$ and $\fvp_k(\vx\vy)=k(\vy)-k(\vx)$ if{}f
$k(\vx)_1<k(\vy)_1$.
Now, by  (\ref{harmadikcs}) and (\ref{inout-e}),
for every inertial observer $k$ and inertial particles $\vx^1\vy^1,\ldots
\vx^n\vy^n$ with $\bigwedge_{i=1}^n \sqspeed_k(\vx^i\vy^i)<\infty$,
we get}
\begin{multline}
\label{szornyu1} 
\hosszu{\sum_{\{i\: :\: \inc_k (\vx^i\vy^i,k(\vx^i))\}}\fvp_k(\vx^i\vy^i)=
\sum_{\{i\: :\: \out_k (\vx^i\vy^i,k(\vx^i))\}}\fvp_k(\vx^i\vy^i)\qquad
\stackrel{(\ref{inout-e})}{\Longleftrightarrow}}\\
\hosszu{\sum_{\{i\: :\: k(\vy^i)_1< k(\vx^i)_1\}}\fvp_k(\vx^i\vy^i)=
\sum_{\{i\: :\: k(\vx^i)_1< k(\vy^i)_1\}}\fvp_k(\vx^i\vy^i)\qquad
\stackrel{(\ref{harmadikcs})}{\Longleftrightarrow}}\\
\hosszu{\sum_{\{i\: :\: k(\vy^i)_1< k(\vx^i)_1\}} \big(k(\vx^i)-k(\vy^i)\big)=
\sum_{\{i\: :\: k(\vx^i)_1< k(\vy^i)_1\}} \big(k(\vy^i)-k(\vx^i)\big)
\qquad {\Longleftrightarrow}}\\ 
\hosszu{\sum_{i=1}^n \big(k(\vx^i)-k(\vy^i)\big)=\vo.}
\end{multline}

\hosszu{Therefore,  by (\ref{szornyu1}) 
and the equivalent form (\ref{pcollfm-d})  
of the definition of $\poscoll$, we get that}
\begin{multline}
\label{coll1}
\hosszu{\poscoll_k(\vx^1\vy^1\ldots\vx^n\vy^n)\ \Longleftrightarrow\
\vx^1=\ldots=\vx^n\land}\\
\hosszu{\bigwedge_{i=1}^n \sqspeed_k(\vx^i\vy^i)<\infty\  \land\  
\sum_{i=1}^n\big(k(\vx^i)-k(\vy^i)\big)=\vo.} 
\end{multline}
\hosszu{for every $n$, inertial observer $k$, and
inertial particles $\vx^1\vy^1,\ldots,\vx^n\vy^n$.}

\hosszu{To prove that \ax{AxColl_n} is valid in the models,
let $k$ and $h$ be inertial observers and let $\vx^1\vy^1,\ldots,\vx^n\vy^n$
be inertial particles such that $\poscoll_k(\vx^1\vy^1\ldots\vx^n\vy^n)$
and $\sqspeed_h(\vx^i\vy^i)<\infty$ for every $i$. Then, by (\ref{coll1}),
$\vx^1=\ldots=\vx^n$  and $\sum_{i=1}^n\big(k(\vx^i)-k(\vy^i)\big)=\vo$.
Therefore, $\sum_{i=1}^n\big(h(\vx^i)-h(\vy^i)\big)=\vo$, since 
$h\circ k^{-1}$ is an affine transformation taking $k(\vx)$ to 
$h(\vx)$ for every $\vx$. 
Thus, by
(\ref{coll1}), $\poscoll_h(\vx^1\vy^1\ldots\vx^n\vy^n)$ holds.
Therefore, \ax{AxColl_n} is valid in the models for every $n$. Thus the axiom
schema \ax{Coll} is valid in the models.}

\hosszu{Next we turn proving that \ax{Ax\forall\inecoll} is valid in the models. 
Let $\fm_1\subseteq\Q^d$ 
be the set of vectors with positive time components and
let $\fm_2\subseteq\Q^d$ be the set of non-FTL vectors with
positive time components, i.e.,}
\begin{eqnarray*}
\hosszu{\fm_1} & \hosszu{\leteq} &  
\hosszu{\{ \vx\in\Q^d\, :\, 0<x_1\},  \text{ and}}\\
\hosszu{\fm_2} &\hosszu{\leteq}&  \hosszu{\{ 
\vx\in\fm_1\, :\, \sqspace(\vx,\vo)\leq \timed(\vx,\vo)\}}.
\end{eqnarray*} 

\hosszu{Let $\fm$ denote $\fm_1$ in the case of $\mathfrak{M}_1$
and denote $\fm_2$ in the case of $\mathfrak{M}_2$.}

\hosszu{Recall that, by (\ref{wlray}), $\wl_k(\vx\vy)=\ray{k(\vx)k(\vy)}$.  
By (\ref{harmadikcs}), $\fvp_k(\vx\vy)=k(\vx)-k(\vy)$
if{}f $\vx\vy$ is incoming at $k(\vx)$ according
to $k$, and $\fvp_k(\vx\vy)=k(\vy)-k(\vx)$ if{}f
$\vx\vy$ is outgoing at $k(\vx)$. By the above, by the fact that the
observers are Poincar\'e transformations, and by
(\ref{eq-IOb})--(\ref{imp2-d}), it is easy to see that for every
inertial observer at every coordinate point, the set of four-momenta of the
incoming bodies is $\fm$ and the same holds for the outgoing bodies.
Formally, for every inertial observer $k$ and coordinate point $\vx$,}
\begin{equation}
\label{h1}
\hosszu{\fm =\{ \fvp_k(b)\, :\, \Ip(b) \land \inc_k(b,\vx)\}
=\{ \fvp_k(b)\, :\, \Ip(b)\land\out_k(b,\vx)\}. 
}
\end{equation}
\hosszu{By (\ref{wlray}),  for every inertial observer, every inertial particle
of finite speed is an incoming or outgoing body at some coordinate point.
Therefore, (\ref{h1}) implies that}
\begin{equation}
\label{h2}
\hosszu{\fm=\{\fvp_k(b)\, :\, \Ip(b)\land\vv_k(b)<\infty\}}
\end{equation}
\hosszu{for every inertial observer $k$.
 It can be easily seen that $\fmom$ is closed under addition,
i.e.,}
\begin{equation}
\label{h3}
\hosszu{\fvp,\fvp'\in\fmom\quad   \Longrightarrow\quad   
\fvp+\fvp'\in\fmom.}
\end{equation}

\hosszu{To prove that \ax{Ax\forall\inecoll} is valid, let $\vx$ be a 
coordinate point, let $k$ be an inertial observer
and let $a$ and $b$ be inertial particles with finite speeds according to
$k$ such that the sum of their relativistic masses is nonzero.
We have to prove that there are inertial particles $a'$ and $b'$
such that $\vx\text{-}\inecoll_k(a'b')$ and
$\fvp_k(a')=\fvp_k(a)$ and $\fvp_k(b')=\fvp_k(b)$.
 By (\ref{h2}) and (\ref{h3}), we have that 
$\fvp_k(a),\fvp_k(b), \fvp_k(a)+\fvp_k(b)\in\fmom$.
By (\ref{h1}),  there are 
inertial particles  $a',b'$ and $c$ such that
$\fvp_k(a')=\fvp_k(a)$, $\fvp_k(b')=\fvp_k(b)$,
$\fvp_k(c)=\fvp_k(a)+\fvp_k(b)$, $\inc_k(a',\vx)$, $\inc_k(b',\vx)$
and $\out_k(c,\vx)$. Thus $\poscoll_k(a'b'c)$ by (\ref{pcollfm-d}).
Now, by (\ref{xinecoll-d}), $\vx\text{-}\inecoll_k(a'b')$. 
Therefore, \ax{Ax\forall\inecoll} is valid in the models.}

\hosszu{To prove that \ax{AxSpd} is valid, let $k,h\in\IOb$, $\vx\vy\in\Ip$
and $q\in\Q$ be
such that  $\sqspeed_k(\vx\vy)=\sqspeed_h(\vx\vy)=q<1$.
It is enough to prove that $\m_k(\vx\vy)=\m_h(\vx\vy)$.
By (\ref{eeeee}), }
\begin{eqnarray}
\label{klamb}
\hosszu{\sqspace\big(k(\vx),k(\vy)\big)}& \hosszu{=} &  
\hosszu{q\cdot\timed\big(k(\vx),k(\vy)\big),
\text{ and}}\\
\label{hlamb}
\hosszu{\sqspace\big(h(\vx),h(\vy)\big)}&  \hosszu{=} &  
\hosszu{q\cdot\timed\big(h(\vx),h(\vy)\big).}
\end{eqnarray}

\hosszu{Since $h\circ k^{-1}$ is a Poincar\'e transformation taking
$k(\vx)$ and $k(\vy)$ to $h(\vx)$ and $h(\vy)$, respectively,
we get that}
\begin{equation}
\hosszu{\timed\big(k(\vx),k(\vy)\big)^2   -\sqspace\big(k(\vx),k(\vy)\big)^2 
=}
\hosszu{\timed\big(h(\vx),h(\vy)\big)^2-\sqspace\big(h(\vx),h(\vy)\big)^2.}
\label{minkdist}
\end{equation}
\hosszu{By (\ref{klamb})--(\ref{minkdist}),
$(1-q^2)\timed\big(k(\vx),k(\vy)\big)^2=
 (1-q^2)\timed\big(h(\vx),h(\vy)\big)^2$.
Thus $\timed\big(k(\vx),k(\vy)\big)=\timed\big(h(\vx),h(\vy)\big)$.
Now, by (\ref{wl-d}),  
$\m_k(\vx\vy)=\timed\big(k(\vx),k(\vy)\big)=\timed\big(h(\vx),h(\vy)\big)=
\m_h(\vx\vy)$. Thus $\m_k(\vx\vy)=\m_h(\vx\vy)$. 
Therefore,  \ax{AxSpd} is valid in the models.}

\hosszu{To prove that \ax{AxMass} is valid, let $k$ and $h$ be inertial
observers and let $b$ and $b'$ be inertial particles such that
their velocities and their relativistic masses coincide
according to observer $k$. Then, by definition (\ref{fourm-d}) of
four-momentum, the four-momenta of $b$ and $b'$ coincide according to 
observer $k$. Assume that the speed of $b$ is finite according to $h$.
By (\ref{harmadikcs}) and by the fact that $h\circ k^{-1}$ is an
affine transformation, it is easy to prove that the four-momenta 
of $b$ and $b'$ coincide according to inertial observer $h$, too.}%
\footnote{\hosszu{This is so because of the following. Let $b=\vx\vy$ and
$b'=\vx'\vy'$. Then, by (\ref{harmadikcs}), $\pm(k(\vx)-k(\vy))
=\fvp_k(\vx\vy)=\fvp_k(\vx'\vy')=\pm(k(\vx')-k(\vy'))$.
But then $h(\vx)-h(\vy)=\pm (h(\vx')-h(\vy'))$ since
$h\circ k^{-1}$ is an affine transformation taking 
$k(\vx),k(\vy),k(\vx'),k(\vy')$
to  $h(\vx),h(\vy),h(\vx'),h(\vy')$, respectively. By
(\ref{harmadikcs}), we conclude that $\fvp_h(\vx\vy)=\pm\fvp_h(\vx'\vy')$.
The time-components of the four-momenta are positive since relativistic
masses are positive. Therefore, $\fvp_h(\vx\vy)=\fvp_h(\vx'\vy')$.}}    
\hosszu{Then the relativistic masses of $b$ and $b'$ coincide according to $h$
since relativistic mass is the time component of the four-momentum.
Therefore, \ax{AxMass} is valid in the models.}

\hosszu{Now we turn proving that \ax{AxThEx^+} is valid in the models.
We say that (straight) line $\{\vx+q\cdot(\vy-\vx)\, :\, \Q(q)\}$
 is {\bf time-like} if{}f $\ray{\vx\vy}$ is time-like.
The world-lines of inertial observers are the time-like lines
according to observer $\Id$
by (\ref{wl-i}) and (\ref{wl-d}) since  Poincar\'e transformations
take the $\taxis$ to time-like lines and  for any time-like line $\ell$
there is an orthocronous Poincar\'e transformation taking $\ell$ to
$\taxis$. Poincar\'e transformations take the set of time-like lines onto
the set of time-like lines. Therefore, the world-lines of inertial observers
are the time-like lines according to any observer by (\ref{wl-d}).
Therefore, the first part of \ax{AxThEx^+} holds.
It is easy to see that the second part of \ax{AxThEx^+} holds, because of 
(i)--(iv) below.  (i) The set of four-momenta of the incoming bodies 
contains set $\fm_2$ by (\ref{h1}).
(ii) The time components of the four-momenta are the
relativistic masses. (iii) Four-momenta are parallel to the world-lines.
(iv) World-lines of inertial particles are rays.
Therefore, \ax{AxThEx^+} is valid in the models.}

To prove that \ax{Ax\forall Coll} is valid in the models,
let $k$ be an inertial observer, $\vx$ be a coordinate point,
and $a\leteq \vx'\vy'$ be an inertial particle of finite speed according to 
$k$. Let 
$b_1\leteq k^{-1}(\vx)\big(k^{-1}(\vx)+\vx'-\vy'\big)$ and 
$b_2\leteq k^{-1}(\vx)\big(k^{-1}(\vx)+\vy'-\vx'\big)$. By the fact that
$k$ is an affine transformation, 
$k\big(k^{-1}(\vx)+\vx'-\vy'\big)=
\vx+k(\vx')-k(\vy')$ and  $k\big(k^{-1}(\vx)+\vy'-\vx'\big)=
\vx+k(\vy')-k(\vx')$. 
Thus, by  \eqref{wlray},
$\wl_k(a)=\ray k(\vx')k(\vy')$,
$\wl_k(b_1)=\ray{\vx\big(\vx+k(\vx')-k(\vy')\big)}$ and 
$\wl_k(b_2)=\ray{\vx\big(\vx+k(\vy')-k(\vx')\big)}$. Hence the velocities of
$a$, $b_1$ and $b_2$ coincide, and one of $b_1$ and $b_2$ is incoming
at $\vx$ and the other one is outgoing at $\vx$ according to observer $k$. 
Furthermore, by \eqref{wl-d}, 
$\m_k(a)=\m_k(b_1)=\m_k(b_2)=\timed\big(k(\vx'),k(\vy')\big)$.
Thus \ax{Ax\forall Coll} is valid in the models.   
 
\hosszu{By the above, axioms of dynamics are also valid in
our models. Therefore, both $\mathfrak{M}_1$ and ${\mathfrak M}_2$ are
models of \ax{SRDyn}. This completes the proof.}
\end{proof}


\section{Concluding remarks} \label{sec-con} Paper \cite{FTLconsSR} shows
that the existence of FTL particles is logically independent of \ax{SR} an
axiom system of special relativistic kinematics based on Einstein's original
postulates.  In this paper, we have seen that the existence of massive FTL
inertial particles is logically independent of \ax{SRDyn} an extension of
\ax{SR} to special relativistic dynamics.

\begin{figure}[h!tb] \begin{center} \tikzset{->-/.style={decoration={
  markings, mark=at position #1 with {\arrow{>}}},postaction={decorate}}}
\begin{tikzpicture}[scale=2] \coordinate (o) at (0,0); \draw[gray] (-2,-1)
grid (2,2); \draw[red, very thick] (1,-1) -- (o) -- (2,2); \draw[red, very
thick] (-1,-1) -- (o) -- (-2,2); \draw[magenta, ultra thick] (-1.95,2) .. 
controls (-1.8,1.93) and (-1.2,1) ..  (0,1) ..  controls (1.2,1) and
(1.8,1.93) ..  (1.95,2); \draw[magenta, ultra thick] (-1,0.04) ..  controls
(-1.1,1.2) and (-1.9,1.8) ..  (-2,1.95); \draw[magenta, ultra thick]
(1,0.04) ..  controls (1.1,1.2) and (1.9,1.8) ..  (2,1.95); \draw[magenta]
(1,0) circle (0.04); \draw[magenta] (-1,0) circle (0.04);
\draw[->-=.75,blue, ultra thick,>=stealth] (o) -- (1.2,0.8);
\draw[dashed, thin] (1.2,0) -- (1.2,0.8) -- (0,0.8) node[left]
{$\m_{k}(b)$}; \draw[dashed, thin] (1.2,0) -- (1.2,0.8) -- (0,0.8); \fill
(1.2,0.8) circle (0.04); \fill (1.2,0) circle (0.04); \fill (0,0.8) circle
(0.04); \fill (o) circle (0.04); \end{tikzpicture} \end{center}
\caption{\label{fig-decrease} Illustration for equation (\ref{inv})}
\end{figure}

In \cite{SIGMA}, we show that \ax{SRDyn} gives new predictions
on relativistic masses of FTL inertial particles.  In more detail,
\ax{SRDyn} implies that \begin{equation}\label{inv}
\m_k(b)\sqrt{\left|1-\vv_k(b)^2\right|}=\m_h(b)\sqrt{\left|1-\vv_h(b)^2\right|},
\end{equation} where $b$ is a possibly FTL inertial particle and $k$ and $h$
are (ordinary slower than light) inertial observers.  Equation (\ref{inv})
gives back the usual mass-increase theorem for slower than light particles,
and predicts that the relativistic mass and momentum of an FTL particle
decrease with the speed, see Fig.\ref{fig-decrease}.

Similar predictions on FTL particles appear in
Bilaniuk-Deshpande-Su\-dar\-shan~\cite{BiDeSu62}, Sudarshan~\cite{Su70},
Recami~\cite{Rec86,recami-ftl}, and Hill-Cox~\cite{HC12}.

The results in \cite{SIGMA} show that the construction we used here is the
only possible way for extending a model of relativistic dynamics with FTL
particles if some natural basic assumptions (such as conservation of
relativistic mass and momenta) are assumed.

\section{Acknowledgment} We are grateful for valuable discussions to Hajnal
Andr\'eka, G\'abor Czimer, M\'arton G\"om\"ori, Attila Moln\'ar, Istv\'an
N\'emeti, Attila Seress, Ren\'ata Tordai.  This research is supported by the
Hungarian Scientific Research Fund for basic research grants No.~T81188 and
No.~PD84093 as well as by Bolyai Grant.

\bibliographystyle{plain} 

\end{document}